\title{The Rezk Completion for Elementary Topoi}
\author{Kobe Wullaert}{Delft University of Technology, Netherlands \and \url{https://kobewullaert.github.io}}{K.F.Wullaert@tudelft.nl}{https://orcid.org/0000-0003-4281-2739}{}
\author{Niels {van der Weide}}{Radboud University Nijmegen, Netherlands \and \url{https://nmvdw.github.io}}{nweide@cs.ru.nl}{https://orcid.org/0000-0003-1146-4161}{This research was supported by the NWO project ``The Power of Equality'' OCENW.M20.380, which is financed by the Dutch Research Council (NWO).}
\authorrunning{K. Wullaert and N. van der Weide} 
\keywords{univalent foundations; univalent categories; Rezk completions; UniMath; formalization; elementary topoi}
\newcommand{\NNO}{\ensuremath{\mathbb{N}}}
\newcommand{\id}{\ensuremath{\mathsf{id}}}
\newcommand{\ob}[1]{\ensuremath{\mathsf{ob}(#1)}}
\newcommand{\op}[1]{\ensuremath{{#1}^\mathsf{op}}}
\newcommand{\cat}[1]{\ensuremath{\mathcal{#1}}}
\renewcommand{\AA}{\cat{A}}
\newcommand{\BB}{\cat{B}}
\newcommand{\CC}{\cat{C}}
\newcommand{\DD}{\cat{D}}
\newcommand{\EE}{\cat{E}}
\newcommand{\ccat}[1]{\mathsf{#1}}
\newcommand{\CAT}{\ccat{Cat}}
\newcommand{\SET}{\ccat{Set}}
\newcommand{\MON}{\ccat{Mon}}
\newcommand{\LCAT}{\ccat{FinLim}}
\newcommand{\CLCAT}{\ccat{FinColim}}
\newcommand{\OLCAT}{\LCAT_\Omega}
\newcommand{\NNOCAT}{\CAT_\NNO}
\newcommand{\CCC}{\ccat{CCC}}
\newcommand{\TOP}{\ccat{ElTop}}
\newcommand{\disp}[1]{\overline{#1}}
\newcommand{\XX}{x}
\newcommand{\YY}{y}
\newcommand{\xxx}{\disp{x}}
\newcommand{\yyy}{\disp{y}}
\newcommand{\zzz}{\disp{z}}
\newcommand{\ff}{f}
\newcommand{\fff}{\disp{f}}
\renewcommand{\gg}{g}
\renewcommand{\ggg}{\disp{g}}
\newcommand{\RC}{\mathsf{RC}}
\newcommand{\univ}[1]{{#1}_{\mathsf{univ}}}
\newcommand{\inclcat}{\iota}
\newcommand{\jota}{\hat{\inclcat}}
\newcommand{\iotaiota}{\bar{\inclcat}}
\newcommand{\ev}{\ensuremath{\mathsf{ev}}}
\newcommand{\expo}[2]{\ensuremath{{#2}^{#1}}}
\newcommand{\UU}{\ensuremath{\mathcal{U}}}
\newcommand{\AAA}{\ensuremath{A}}
\newcommand{\PER}{\ensuremath{\mathsf{PER}}}
\newcommand{\idtoiso}{\mathsf{idtoiso}}
\newcommand{\trunc}[1]{\ensuremath{\| {#1} \|}}
\newcommand{\dfn}[1]{\textbf{#1}}
\newcommand{\eg}{e.\,g.\,}
\newcommand{\UniMath}{\textsf{UniMath}}
\newcommand{\Coq}{\textsf{Rocq}}
\newcommand{\coqsymbol}{\begingroup\normalfont\includegraphics[height=\fontcharht\font`\{]{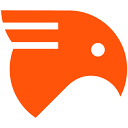}\endgroup}
\newcommand{\coqident}[2]{\href{\coqdocbaseurl #1.html\##2}{\coqsymbol}}
\newcommand{\coqidenturl}[3]{\href{\coqdocbaseurl #1.html\##3}{\coqsymbol}}
\newcommand{\coqfile}[1]{\href{\coqdocbaseurl #1.html}{\nolinkurl{#1}}}
\newcommand{\shortcommitnumber}{94b49c3}
\begin{document}

\maketitle

\begin{abstract}
The development of category theory in univalent foundations and the formalization thereof is an active field of research.
Categories in that setting are often assumed to be univalent which means that identities and isomorphisms of objects coincide.
One consequence hereof is that equivalences and identities coincide for univalent categories and that structure on univalent categories transfers along equivalences.
However, constructions such as the Kleisli category, the Karoubi envelope, and the tripos-to-topos construction, do not necessarily give univalent categories.
To deal with that problem, one uses the Rezk completion, which completes a category into a univalent one.
However, to use the Rezk completion when considering categories with structure, one also needs to show that the Rezk completion inherits the structure from the original category.

In this work, we present a modular framework for lifting the Rezk completion from categories to categories with structure.
We demonstrate the modularity of our framework by lifting the Rezk completion from categories to elementary topoi in manageable steps.
\end{abstract}

\section{Introduction}

Univalent foundations \cite{hottbook,UPbook} has proven itself to be an adequate foundation for the development of mathematics and category theory in particular.
In such a foundation, structures adhere to the principle that equality and equivalence of structures coincide, such that equivalent structures are indistinguishable.
This principle is known as the univalence principle \cite{UPbook}.

For set-level structures, this principle is a \emph{direct} consequence of the univalence axiom, which postulates that principle for types.
However, for higher-dimensional structures, such as categories, the story is more subtle \cite{AKS2015,hottbook}.
For categories to adhere to this principle, we need an equivalence between the identity type $(\AA = \CC)$ and the type $(\AA \simeq \CC)$ of \emph{equivalences}.
The univalence axiom implies that we have an equivalence between $(\AA = \CC)$ and the type of \emph{isomorphisms} $(\AA \cong \CC)$.
Hence, the univalence principle for categories can be restated as having an equivalence $(\AA \cong \CC) \simeq^{(1)} (\AA \simeq \CC)$.
However, $(1)$ does not hold in general, unless when we consider \emph{univalent categories}: categories for which identity and isomorphism for objects are equivalent.

The univalence condition for categories is met by many naturally occurring categories and univalent categories correspond to categories in Voevodsky's simplicial set model of homotopy type theory \cite{KL2021}.
Nevertheless, different constructions do not produce univalent categories; see \cref{sec:on-rezk-for-cats}.
This is where the \emph{Rezk completion} comes into play.

Abstractly, the Rezk completion refers to the univalent completion of a category \cite{AKS2015}.
That is, the Rezk completion of a category $\CC$ is a univalent category $\RC(\CC)$, together with a functor $\CC \xrightarrow{\eta_\CC} \RC(\CC)$, which satisfies the following universal property.
For every univalent category $\EE$, any functor of type $\CC \to \EE$ factors uniquely through $\eta_\CC$.
In \cite{AKS2015}, it was shown that this universal property is equivalent to the requirement that $\eta_\CC$ is a \emph{weak} equivalence.
However, the weakness of this equivalence means that $\eta_\CC$ need not have an inverse.
In particular, structure on $\CC$ is not guaranteed to transport across $\eta_\CC$ to a structure on $\RC(\CC)$.

In this work, we present a modular approach to lift the Rezk completion from categories to structured categories, such as elementary topoi.
In particular, we show that the Rezk completion of a category inherits the structure of a topos and that the universal property extends to the level of topoi.
Furthermore, this work is accompanied by a formalization in the \Coq -library \UniMath, and our approach is designed with this formalization in mind.
In particular, our approach is modular and is designed to be applicable to a lot of structures a category can have.
Furthermore, we expect our approach to be practical for the formalization hereof in any proof assistant based on homotopy type theory.

\subsection{Prior Work}

As for ($1$-)categories, higher-dimensional categories are supposed to satisfy an additional univalence condition such that they adhere to the univalence principle.
A general theory making this precise has been developed in \cite{UPbook} where a signature for finite-dimensional categorical structures has been developed and a notion of equivalence (indiscernibility) has been formulated for such structures, which for e.g. categories coincides with the notion of equivalence.

Different categorical theories have been developed in univalent foundations, including the theory of monoidal categories \cite{WMA2022}, enriched categories \cite{vdW24-enriched}, bicategories \cite{AFMVW2021}, and double (bi)categories \cite{WRAN2024,RWAN2025}.
In particular, those works formulated the notion of univalence for such categorical structures, which coincide with the one given by \cite{UPbook}.

The Rezk completion \cite{AKS2015} has been lifted to multiple categorical structures, including some of the mentioned above \cite{AFMVW2021,WMA2022,vdW24-enriched}.
Furthermore, the Rezk completion has been constructed for structured groupoids in \cite{VW2021}, where the structure is defined via higher inductive types.

Moreover, univalent category theory, and the Rezk completion in particular, has been used in the study of denotational semantics.
For instance, the Rezk completion is used to transform a category with representable maps of presheaves into a univalent category with families \cite{ALV2017}.
Furthermore, the Rezk completion is used to relate two different constructions of Scott's representation theorem, which model the untyped $\lambda$-calculus \cite{LWA2025}.
Lastly, various classes of topoi and their internal logic, based on comprehension categories, are constructed in the univalent setting in \cite{vdW2024-internallang-univcat}.

\subsection{Overview and Contributions}

This paper is structured as follows:

\begin{enumerate}
\item 
	\cref{sec:on-rezk-for-cats} contains background material on univalent categories and the Rezk completion.
\item 
	In \cref{sec:on-lifting-adj}, we discuss the framework and strategy for lifting the Rezk completion from categories to structured categories.
	We start by briefly recalling some of the basics of displayed bicategories.
	Then we define the notion of displayed universal arrows as an intermediate step for the modular construction of biadjoints.
	Afterwards, we discuss how such arrows specialize in the case of lifting the Rezk completion.
\item 
	In \cref{sec:on-rezk-for-top}, we apply the framework presented in \cref{sec:on-lifting-adj} to lift the Rezk completion from categories to various classes of structured categories where we work our way up to W-topoi.
	More precisely, we consider finitely co/complete categories (\cref{sec:RC-completeness}), cartesian closed closed categories (\cref{sec:RC-CCC}), elementary topoi (\cref{sec:RC-TOP}), and W-topoi (\cref{sec:RC:TOPN}).
\item 
	In \cref{sec:related,sec:conclusion}, we elaborate on the different methods to extend the Rezk completion to categories with structure found in the literature and possible ways in which this work can be extended.
\end{enumerate}

\subsection{Formalization and Foundations}

This work is accompanied by formalization in the \Coq -library \UniMath on univalent mathematics \cite{Coq-refman,UniMath}.
The logic underlying the \UniMath library is an intensional dependent type theory whose universes satisfy the univalence axiom.
We do not rely on the axiom of choice, except in \cref{lemma:AC-RC-inf}, nor the law of excluded middle.
We rely on propositional truncation.

The formalization, and this work, makes heavily use of the already existing material on (bi)category theory in the \UniMath -library and the accompanied literature \cite{AKS2015,AFMVW2021}.
The environments annotated with a \coqident{}{} denote that the result is formalized.
Upon clicking \coqident{}{}, the reader is directed towards the HTML documentation.
The commit number from which the documentation is generated is \shortcommitnumber.

\section{On the Rezk Completion for Categories}
\label{sec:on-rezk-for-cats}

In this section, we recall and elaborate on the Rezk completion for categories.
Most of the material in this section is already found in \cite{AKS2015}.
First, we recall the definition of univalent categories in \cref{dfn:univ-cat}.
Then we recall weak equivalences (\cref{dfn:weak-equiv}) and we recall the Rezk completion in terms of weak equivalences (\cref{dfn:Rezk-Completion}).
Afterwards, we zoom in on some examples of categorical constructions where the Rezk completion plays a role.
Lastly, we discuss a bicategorical formulation of the Rezk completion, which we use in \cref{sec:on-rezk-for-top,sec:on-lifting-adj} to lift the Rezk completion from categories to structured categories.

\begin{subparagraph}{Univalent Categories}
In univalent foundations, \textbf{categories} $\CC$ consist in particular of a type of objects $\ob{\CC}$ and for every pair of objects $x,y : \ob{\CC}$ a type of morphisms $\CC(x,y)$.
Furthermore, a category $\CC$ consists of a composition operation on the morphisms $(- \cdot -) : \CC(x,y) \to \CC(y,z) \to \CC(x,z)$, identity morphisms $\id_x : \CC(x,x)$,
and the composition is associative and unital.
Lastly, each $\CC(x,y)$ is a set, which means that equalities between morphisms form a proposition; recall that a type $A$ is a proposition if each identity type $x = y$ is equivalent to the unit type, for every $x,y : A$.

However, as mentioned in the introduction, such categories do not adhere to the principle that equality and equivalence coincide.
That is, the univalence axiom implies that equality and isomorphism for categories coincide \cite[Lemma 6.16]{AKS2015}, but equality and isomorphism do not coincide with equivalence for categories.
Those categories for which equivalences do coincide with the aforementioned are precisely the \textbf{univalent categories}.
These are categories for which identities and isomorphisms between two objects coincide.
More precisely, let $\CC$ be a category and denote by $(x \cong y)$ the type of isomorphisms from $x$ to $y$, for $x,y : \ob{\CC}$.
By path induction, we have a function $\idtoiso_{x,y} : (x = y) \to (x \cong y)$.
\begin{definition}
[\coqident{CategoryTheory.Core.Univalence}{is_univalent} Dfn 3.6. \cite{AKS2015}]
\label{dfn:univ-cat}
A category $\CC$ is \textbf{univalent} if $\idtoiso_{x,y}$ is an equivalence of types, for every $x, y : \ob{\CC}$.
\end{definition}
By the univalence axiom, many categories, such as those of sets, groups, etc., are univalent.

Observe that univalence implies that $\ob{\CC}$ is a groupoid, i.e., the types $x=y$ are sets.
\end{subparagraph}

\begin{subparagraph}{The Rezk completion}
\label{para:Rezk-completion}
The Rezk completion refers to the univalent completion of a category \cite{AKS2015,hottbook}.
That is, if $\CC$ is a category, then its Rezk completion is a univalent category $\hat{\CC}$ together with a functor $\eta : \CC \to \hat{\CC}$ which is initial among the functors from $\CC$ to a univalent category, i.e., for every $\EE$ univalent, the precomposition functor $(\eta \cdot -) : [\hat{\CC}, \EE] \to [\CC, \EE]$ is an isomorphism of categories, where $[-,-]$ denotes the functor category.
In particular, this means that every functor into a univalent category factors uniquely along through $\eta$:
\[
\begin{tikzcd}
\CC \arrow[rr, "\eta"] \arrow[rd, swap, "\forall F"] & & \hat{\CC} \arrow[dl, dashed, "\exists! H"] \\
& \EE &
\end{tikzcd}
\]

We start by recalling weak equivalences, which allow us to describe the universal property of the Rezk completion.

\begin{definition}
[\coqident{CategoryTheory.WeakEquivalences.Core}{is_weak_equiv} Dfn 6.7. \cite{AKS2015}]
\label{dfn:weak-equiv}
Let $\CC, \EE$ be categories.
A \textbf{weak equivalence} is a functor $F : \CC \to \EE$ which is fully faithful and is \textbf{essentially surjective} (on objects); 
where essential surjectivity (\coqident{CategoryTheory.Core.Functors}{essentially_surjective}) means that for every object $y : \EE$ in the codomain, there merely exists an object $x : \CC$ and an isomorphism of type $F\,x \cong y$.
\end{definition}

Observe that essential surjectivity does not imply that for every $y : \EE$ one can choose an object $x : \CC$ in the preimage, due to the truncation in the mere existence.
In particular, there need not to be a function $F^{-1} : \ob{\EE} \to \ob{\CC}$ such that $F(F^{-1}(y)) \cong y$.
Those functors $F : \CC \to \EE$ for which for every $y : \ob{\EE}$ we can choose an object $x : \ob{\CC}$ such that $F\,x \cong y$ are said to be \dfn{split surjective} (\coqident{CategoryTheory.Core.Functors}{split_essentially_surjective}).
Furthermore, being split surjective and fully faithful is equivalent to $F$ having a right adjoint whose unit and counit are isomorphisms.
Hence, such $F$'s are called \dfn{(adjoint) equivalences}.
However, as opposed to weak equivalences, being an equivalence is generally not a proposition.
In particular, a weak equivalence is not an equivalence.
In fact, between categories whose type of objects are sets (setcategories), the statement that weak equivalences are split surjective is equivalent to the axiom of choice (\cite[Page 309]{hottbook}).
However, between univalent categories, weak equivalences, equivalences, and even isomorphisms (\cite[Dfn. 6.9, Lemma 6.10]{AKS2015}) do coincide.
See \cite{AKS2015} for more details.

\begin{proposition}[\coqident{CategoryTheory.WeakEquivalences.Core}{precomp_is_iso} Thm 8.4. \cite{AKS2015}]
\label{prop:precompiso}
Let $G : \CC_0 \to \CC_1$ be a weak equivalence between not-necessarily univalent categories.
Then for every univalent category $\CC_2$, the precomposition functor $(G \cdot -) : [\CC_1, \CC_2] \to [\CC_0, \CC_2]$
is an equivalence of univalent categories.
\end{proposition}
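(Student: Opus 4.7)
The plan is to show the precomposition functor $(G \cdot -)$ is both essentially surjective and fully faithful on $1$-cells (i.e., natural transformations). Since $\CC_2$ is univalent, the functor categories $[\CC_0, \CC_2]$ and $[\CC_1, \CC_2]$ inherit univalence from the codomain, and hence a functor between them that is essentially surjective and fully faithful is an equivalence of univalent categories (in fact, even an isomorphism). Note that one cannot invoke a $2$-out-of-$3$ argument on $G$ directly: $G$ is only a weak equivalence, so the extension along $G$ must be constructed by hand.

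Essential surjectivity is the heart of the argument. Given $F : \CC_0 \to \CC_2$, I construct $H : \CC_1 \to \CC_2$ together with a natural isomorphism $G \cdot H \cong F$. For $y : \CC_1$, essential surjectivity of $G$ provides a mere preimage in $\trunc{\sum_{x : \CC_0} (G\,x \cong y)}$, and I want to set $H(y) := F(x)$ for a representative $x$. The obstacle is eliminating this propositional truncation into $\ob{\CC_2}$, which is only a $1$-type, not a set. The key is that any two preimages $(x_1, \phi_1)$ and $(x_2, \phi_2)$ yield an isomorphism $G\,x_1 \cong G\,x_2$, which by full faithfulness of $G$ lifts uniquely to an isomorphism $x_1 \cong x_2$, and applying $F$ produces an isomorphism $F(x_1) \cong F(x_2)$ in $\CC_2$; by univalence of $\CC_2$, this becomes a canonical equality $F(x_1) = F(x_2)$. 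The uniqueness of the lifts (again from faithfulness) provides the coherence needed to factor the assignment through the truncation. Once $H$ is defined on objects, full faithfulness of $G$ forces its action on morphisms, and the chosen $\phi$'s at each $G\,x$ assemble into the required natural isomorphism $G \cdot H \cong F$.

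Full faithfulness of $(G \cdot -)$ is analogous but lighter: a natural transformation $\alpha : G \cdot H_1 \to G \cdot H_2$ extends uniquely to $\beta : H_1 \to H_2$ by choosing at each $y : \CC_1$ a preimage $(x, \phi)$ and transporting $\alpha_x$ along the $H_i$-images of $\phi$. Truncation elimination is immediate here because the target is a set of morphisms, and well-definedness, naturality, and uniqueness all follow from full faithfulness of $G$ together with naturality of $\alpha$. The difficulty of the proof is thus concentrated in the object-level truncation elimination for the extension, and this is exactly the step in which univalence of $\CC_2$ is indispensable: without it the candidate values $F(x_i)$ would merely be isomorphic and no honest choice of $H(y)$ could be made.
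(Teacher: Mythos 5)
Your proposal is correct and follows essentially the same route as the proof the paper defers to (Theorem 8.4 of Ahrens--Kapulkin--Shulman): show precomposition is fully faithful and essentially surjective, with the crux being the object-level truncation elimination for extending $F$ along $G$, handled via the uniquely-lifted transition isomorphisms, univalence of $\CC_2$, and the coherence coming from uniqueness of lifts. Your closing observation that the difficulty concentrates precisely where univalence of $\CC_2$ is indispensable matches the cited argument's use of a contractible type of candidate images.
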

For a characterization of univalence in terms of weak equivalences, see \cite[Thm. 8.9]{AKS2015}.

The Rezk completion \cite{AKS2015} refers to a construction of the univalent completion (see \cref{rem:existence-RC}).
Nevertheless, we only rely on the description of the Rezk completion in terms of weak equivalences.
Hence, we will refer to \emph{the Rezk completion} as in the following definition:
\begin{definition}
\label{dfn:Rezk-Completion}
Let $\CC$ be a category.
A \textbf{Rezk completion} for $\CC$ consists of a univalent category $\hat{\CC}$ and a weak equivalence $\eta : \CC \to \hat{\CC}$.
\end{definition}

That a Rezk completion indeed satisfies the universal property of the univalent completion is precisely \cref{prop:precompiso}.
Furthermore, for every category, its type of Rezk completions is a proposition.
In particular, Rezk completions are unique up to identity, and hence, we can say \textit{the} Rezk completion.
Given a category $\CC$, we denote by the $\RC(\CC) := \hat{\CC}$ the univalent category and $\eta_\CC : \CC \to \RC(\CC)$ the weak equivalence.
\end{subparagraph}

\begin{remark}
\label{rem:existence-RC}
The Rezk completion has two constructions \cite{AKS2015,hottbook}.
In \cite{AKS2015}, the Rezk completion of a category $\CC$ is constructed as a full subcategory of the category of presheaves $[\op{\CC}, \SET]$ given by those functors which are merely isomorphic to a representable functor.
However, this construction raises the universe level since $[\op{\CC}, \SET]$ has a higher universe level than $\CC$.
More precisely, recall that there is a hierarchy of universes, let $\mathcal{U}_k$ be the universe at level $k$, and denote by $\CAT^{(i,j)}$ the type of categories whose type of objects is in $\mathcal{U}^{(i)}$ and where every $\CC(x,y)$ is in $\mathcal{U}_j$.
Whenever we have $\CC : \CAT^{(i,j)}$, we have $\RC(\CC) : \CAT^{(i \vee (j+1), i \vee j)}$, where $\vee$ is the least upperbound of universe levels.

We can also consider a second construction \cite{hottbook}, assuming the existence of higher inductive types in the same universe.
Here, we can define the type of objects of $\RC(\CC)$ as a higher inductive type which contains, in particular, a term constructor $i : \ob{\CC} \to \ob{\RC(\CC)}$ and a path constructor $\prod_{x y : \CC} (x \cong y) \to (i\,x = i\,y)$.
This construction has the advantage that the Rezk completion remains in the same universe.
\end{remark}

\begin{subparagraph}{Examples}
\label{para:examples-UCRC}
We now discuss some examples of categories and constructions thereof to illustrate the following aspects concerning the Rezk completion.
\cref{exa:set-cats} illustrates that the Rezk completion generalizes certain constructions on set-theoretic structures such as groups and posets.
\cref{exa:kleisli,exa:idemp} illustrate that working in univalent foundations distinguishes constructions which are classically equivalent.
\cref{exa:trip-to-top} is the motivating example of this work, where we have a construction of topoi which usually does not produce a univalent topos.
Hence, to construct an alternative univalent construction thereof, the Rezk completion is essential.
In particular, \cref{exa:trip-to-top} illustrates that we need sufficient results about the transfer of structure from a category to its Rezk completion.
Of course, this observation also plays a role in the other examples.

\begin{example}[Setcategories]
\label{exa:set-cats}
Setcategories are categories whose type of objects is a set.
Often, such categories are not univalent. 

\begin{enumerate}
\item The walking iso category, i.e., the category with two distinct but isomorphic objects, is not univalent.
The Rezk completion hereof is the terminal category.
More generally, we can consider sets with a preorder, which are equivalently setcategories whose hom-sets are propositions.
Such a category is univalent if and only if the preorder is anti-symmetric \cite[Exa. 3.14]{AKS2015}
Hence, the Rezk completion of a preorder is its posetal completion.
\item A monoid $M$ is equivalently a one-object setcategory $BM$.
If $M$ contains more than one invertible element, this category is not univalent.
If $M$ is an abelian group, the Rezk completion of $BM$ is the Eilenberg-MacLane space of $M$ \cite{LF_EMspaces_in_HoTT,BDR_highergroupsinHoTT}, which in univalent foundations is constructed via higher-inductive types.
In particular, whereas the Rezk completion for preordered sets can be understood as picking out a representative of every equivalence class, this example illustrates that non-trivial identities are added.
Furthermore, $\eta : BM \to \RC(BM)$ is an example of a weak equivalence that does not have an inverse.
\item The Rezk completion allows for the construction of quotients of sets.
Indeed, let $\AAA$ be a set equipped with an equivalence relation $\simeq$ (i.e., a setoid).
Then $\AAA$ can be made into a category whose objects are the elements in $\AAA$ and a morphism of type $a \to b$ witnesses $a \sim b$.
This makes $\AAA$ into a groupoid.
The Rezk completion for $\AAA$ is then a groupoid whose objects are equivalence classes of $\sim$.
More generally, the Rezk completion can be seen as a generalization of the groupoid quotient \cite{VW2021}.
\end{enumerate}
Moreover, many free categories, as generated by a syntax or a graph, are setcategories.
\end{example}

\begin{example}[Kleisli category]
\label{exa:kleisli}
Monads are an important tool in functional programming as they provide the right abstraction to model programs with effects.
Intuitively, given a category $\CC$ of datatypes and programs, and a monad $T$ which specifies an effect, its Kleisli category contains the same objects (datatypes), but the morphisms are programs where the effect is integrated.

The Kleisli category of $T$ can be constructed via Kleisli morphisms or via free algebras.
While the latter construction produces a univalent category, if $\CC$ is univalent, the former does not \cite{W2025,UPbook}.
However, the Rezk completion of the Kleisli morphism construction is precisely the free algebra construction.
\end{example}

\begin{example}[idempotent completion]
\label{exa:idemp}
The (split-)idempotent completion is a fundamental construction in many areas.
For instance, in category theory, this completion can be used to study equivalences between presheaf categories \cite[Thm 6.5.11]{Borceux1994handbook1}.
Furthermore, this construction generalizes the Cauchy completion of metric spaces \cite{lawvere_metric_1973}.
Moreover, the idempotent completion can be used to construct models of the untyped lambda-calculus \cite{Hyland2017}.
We discuss two constructions of the idempotent completion: the \emph{Cauchy completion} and the \emph{Karoubi envelope}.

To construct the Karoubi envelope of a category $\CC$, we use the notion of split idempotents.
A morphisms $f : \CC(x,x)$ is \emph{idempotent} if $f \cdot f = f$, and an idempotent morphism $f$ is \emph{split} if $f$ can be factorized as a composite $x \xrightarrow{\pi} y \xrightarrow{\iota} x$, for some $y : \CC$, and such that $\id_y = \iota \cdot \pi$.
The objects of the Karoubi envelope are the idempotent morphisms in $\CC$.

In general, the \emph{Karoubi envelope} is not univalent.

The Cauchy completion is constructed as a full subcategory of the category of presheaves.
The Cauchy completion is always univalent, and is precisely the Rezk completion of the Karoubi envelope.
See \cite{LWA2025} for more details.
\end{example}

\begin{example}[tripos-to-topos]
\label{exa:trip-to-top}
The tripos-to-topos construction is fundamental in topos theory \cite{Pitts2002,HJP1980}. One can use it to construct realizability topoi \cite{OostenRealizability} and to give an alternative description of categories of sheaves on a complete Heyting algebra [4]. Intuitively, both triposes and topoi represent models of higher-order intuitionistic logic, but there is a key difference. While one represents formulas as morphisms into the subobject classifier in a topos, triposes offer more flexibility. More specifically, in a tripos we have a functor $\mathbb{P} : \op{\SET} \to \mathsf{HA}$ that assigns to each set a Heyting algebra of formulas. In essence, the notion of formula in a topos $\mathcal{E}$ is internal to $\mathcal{E}$, but this is not so in a tripos. The tripos-to-topos construction constructs from every tripos $\mathbb{P} : \op{\SET} \to \mathsf{HA}$ a corresponding topos $\PER(\mathbb{P})$.

To understand the tripos-to-topos construction better, we consider an example. Every complete Heyting algebra $\mathcal{H}$ gives rise to a tripos such that the formulas on a set $X$ are the same as functions $X \to \mathcal{H}$. If we apply the tripos-to-topos construction, then we get the category of H-valued sets \cite{Higgs1984}, which is not necessarily univalent.

In general, the tripos-to-topos construction does not give a univalent category.
\end{example}

\begin{example}[functor categories]
\label{exa:functor-cats}
Let $\AA$ and $\CC$ be categories.
Univalence is preserved under taking functor categories.
That is, if $\CC$ is univalent, then the category $[\AA,\CC]$ of functors is also univalent \cite[Thm. 4.15]{AKS2015}.
However, even though the functor categories preserve univalence, they do not preserve the Rezk completion.
That is, the Rezk completion of a functor category $[\AA,\CC]$ is not the functor category $[\AA, \RC(\CC)]$.
Indeed, the post-composition functor $(- \cdot \eta_\CC) : [\AA, \CC] \to [\AA, \RC(\CC)]$ is fully faithful but not essentially surjective.
However, the Rezk completion of $[\AA, \CC]$ can be constructed as the replete image of $(- \cdot \eta_\CC)$.
\end{example}

\end{subparagraph}

\begin{subparagraph}{A bicategorical perspective}
To lift the Rezk completion from categories to structured categories, we formulate the Rezk completion bicategorically.

Analogous to the way that structured sets, e.g., monoids, form the objects in their respective categories, categories with additional structure form the objects in a bicategory \cite{AFMVW2021}.
One advantage of working bicategorically is that we can treat structure uniformly.
Furthermore, the universal property of the Rezk completion can be internalized in the bicategory $\CAT$ of categories, functors, and natural transformations, since the Rezk completion makes the bicategory $\univ{\CAT}$ of univalent categories into a reflective subbicategory of $\CAT$.
That is, the inclusion of $\univ{\CAT}$ into $\CAT$ has a left biadjoint (\cref{cor:rezk-as-leftadj}).

\begin{definition}
[\coqident{Bicategories.PseudoFunctors.UniversalArrow}{left_universal_arrow}]
\label{dfn:pseudofunctor}
A pseudofunctor $R : \BB_1 \to \BB_2$ has a \textbf{left biadjoint} $L$ if
we have a function $L : \BB_2 \to \BB_1$, a family of morphisms $\eta : \prod_{X : \BB_2} X \to R(L(X))$ called \textbf{the unit}, and for every $X : \BB_2$ and $Y : \BB_1$, the functor $\eta_X \cdot R(-) : \BB_1(L\,X, Y) \to \BB_2(X, R\,Y)$ is an adjoint equivalence of categories.
We write $L \dashv R$ to denote that $L$ is a left biadjoint of $R$.
\end{definition}

\begin{proposition}
\label{cor:rezk-as-leftadj}
The inclusion of $\univ{\CAT}$ into $\CAT$ has a left biadjoint $\RC : \CAT \to \univ{\CAT}$.
In particular, the action on objects is given by $\CC \mapsto \RC(\CC)$, and the unit is pointwise given by $\eta_\CC$.
\end{proposition}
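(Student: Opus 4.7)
The plan is to verify the three pieces of data required by \cref{dfn:pseudofunctor} in the order they appear there. First I would take the object assignment $\RC : \CAT \to \univ{\CAT}$ to be the one furnished by \cref{dfn:Rezk-Completion}: since the type of Rezk completions of a given category is a proposition (as noted just after \cref{dfn:Rezk-Completion}), every category has an essentially unique Rezk completion $\RC(\CC)$, yielding a well-defined function on objects. For the unit at $\CC$ I would take the weak equivalence $\eta_\CC : \CC \to \RC(\CC)$; this lives in $\CAT(\CC, \inclcat(\RC(\CC)))$ because the inclusion $\inclcat : \univ{\CAT} \to \CAT$ acts as the identity on both objects and 1-cells.

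The heart of the argument is the local equivalence condition. For every univalent category $\EE$, I must show that the functor
\[
(\eta_\CC \cdot \inclcat(-)) : \univ{\CAT}(\RC(\CC), \EE) \to \CAT(\CC, \inclcat(\EE))
\]
is an adjoint equivalence of categories. Because $\univ{\CAT}$ is a full sub-bicategory of $\CAT$, both hom-categories are canonically identified with the functor categories $[\RC(\CC), \EE]$ and $[\CC, \EE]$, and under these identifications the above functor becomes the ordinary precomposition functor $(\eta_\CC \cdot -)$. Since $\eta_\CC$ is a weak equivalence and $\EE$ is univalent, \cref{prop:precompiso} applies and delivers that this precomposition is an equivalence of univalent categories, which can then be promoted to an adjoint equivalence.

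The argument is essentially bookkeeping: all the mathematical content is already packaged in \cref{prop:precompiso}. The subtlety I anticipate is matching the formal shape of \cref{dfn:pseudofunctor} with the existing infrastructure --- in particular identifying $\CAT(\CC, \inclcat(\EE))$ with $[\CC, \EE]$ and checking that this identification is compatible with composition by $\eta_\CC$ on both the object and morphism parts of the hom-categories --- but since $\inclcat$ is a full sub-bicategory inclusion these identifications are strict, so no genuine obstruction arises beyond unpacking definitions.
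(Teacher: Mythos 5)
Your proposal is correct and matches the paper's (implicit) argument: the paper states \cref{cor:rezk-as-leftadj} without a separate proof precisely because, after unfolding \cref{dfn:pseudofunctor} for the full inclusion $\inclcat$, the local adjoint-equivalence condition is exactly \cref{prop:precompiso} applied to the weak equivalence $\eta_\CC$. The bookkeeping you describe (identifying the hom-categories with functor categories and promoting the equivalence to an adjoint one) is the same unpacking the paper relies on.
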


In \cref{sec:disp-biadj,sec:on-rezk-for-top}, we elaborate on how the theory of (displayed) bicategories, in combination with \cref{cor:rezk-as-leftadj}, can be used for modularly constructing the Rezk completion for categories with structure.
\end{subparagraph}

\section{On the Lifting of Biadjoints}
\label{sec:on-lifting-adj}

The examples presented in \cref{sec:on-rezk-for-cats} illustrate that various constructions for categories often found produce a not-necessarily univalent category.
The Rezk completion provides a solution to this problem. 
However, while structure on a category can be transported along isomorphisms, we need to verify such transportations for weak equivalences by hand.

Now we can describe the Rezk completion and its universal property a left biadjoint.
That is, the Rezk completion forms a pseudofunctor $\RC : \CAT \to \univ{\CAT}$, which is a left biadjoint to the forgetful pseudofunctor $\iota : \univ{\CAT} \to \CAT$.
The transportation of structure from categories to their Rezk completions can then be described as follows.
Consider a structure on categories and let $\BB$ be the bicategory of categories equipped with that structure and whose morphisms preserve the structure.
For example, consider categories with finite products.
Then $\BB$ is the bicategory of finite products and functors that preserve those products.
We have an inclusion $\bar{\iota} : \univ{\BB} \hookrightarrow \BB$, where $\BB$ is the bicategory of univalent categories equipped finite products.
By forgetting the products, we have forgetful pseudofunctors $\mathsf{U} : \BB \to \CAT$ and $\mathsf{U}\vert_{\mathsf{univ}} : \univ{\BB} \to \univ{\CAT}$.
Furthermore, we have that the following diagram commutes:
\[
\begin{tikzcd}
\univ{\BB} \arrow[d, "\mathsf{U}\vert_{\mathsf{univ}}", swap] \arrow[r, "\bar{\iota}"] & \BB \arrow[d, "\mathsf{U}"] \\
\univ{\CAT} \arrow[r, "\iota"] & \CAT
\end{tikzcd}
\]
The transportation and preservation of finite products by the Rezk completion then corresponds to having a left biadjoint $\bar{\RC}$ of $\bar{\iota} : \univ{\BB} \hookrightarrow \BB$.
We say that the Rezk completion \dfn{lifts} from categories to categories equipped with that structure if a left biadjoint $\bar{\RC}$ exists and if the following diagram commutes:
\[
\begin{tikzcd}
\univ{\BB} \arrow[d, "\mathsf{U}\vert_{\mathsf{univ}}", swap] & \BB \arrow[d, "\mathsf{U}"] \arrow[l, "\bar{\RC}", swap] \\
\univ{\CAT} & \CAT \arrow[l, "\RC"].
\end{tikzcd}
\]

In this section, we present a framework and strategy for constructing such lifts, which we use in \cref{sec:on-rezk-for-top} to lift the Rezk completion from categories to elementary topoi.
The methodology builds on the theory of displayed bicategories \cite{AFMVW2021}.
In \cref{sec:disp-bicat-structured-cats}, we discuss how we represent structure on categories via displayed bicategories.
Furthermore, we introduce displayed universal arrows in \cref{sec:disp-biadj}, which allows us to lift the Rezk completion from categories to structured categories.
In \cref{sec:disp-RC}, we observe that the lifting to e.g., elementary topoi often simplifies in at least two ways, which leads to \cref{prop:disp-univ-arrow-over-cat}.

\subsection{Displayed Bicategories and Structured Categories}
\label{sec:disp-bicat-structured-cats}

In the remainder of this paper, we make heavy use of displayed bicategories,
and we start by recalling this notion.
Displayed bicategories serve various purposes, and one of these is that we can use them to modularly construct bicategories and to modularly prove their univalence \cite{AFMVW2021,AL2019}.
To understand what displayed bicategories are, let us first recall an example of a displayed category.
We can construct the category $\MON$ of monoids and monoid homomorphisms by endowing the objects and morphisms of $\SET$ with extra structure and properties.
Specifically, for every set there is a type of monoid structures on it, and for every function between sets with monoid structures on them we have a type expressing that this function is a homomorphism.
One must also prove that the identity function is a homomorphism and that homomorphisms are closed under composition.
Displayed categories generalize such descriptions of categories.
Specifically, in a displayed category $\DD$ over a category $\CC$,
we have a type $\DD(x)$ of displayed objects over every object $x : \CC$ and a set $\bar{x} \to_{f} \bar{y}$ of displayed morphisms for every morphism $f : x \rightarrow y$ and displayed objects $\bar{x} : \DD(x)$ and $\bar{y} : \DD(y)$.
Displayed bicategories generalize displayed categories to the bicategorical setting (also see \cref{rem:dfn-bicat}).

\begin{definition}%
[\coqident{Bicategories.DisplayedBicats.DispBicat}{disp_bicat}]
\label{dfn:disp-bicat}
Let $\BB$ be a bicategory.
A \dfn{displayed bicategory} (with contractible $2$-cells) $\DD$ over $\BB$ consists of
\begin{enumerate}
  \item for every $\XX : \BB$, a type $\DD(\XX)$;
  whose terms are called \textit{displayed objects};
  \item for every $\ff : \BB(\XX, \YY)$, $\xxx : \DD(\XX)$, and $\yyy : \DD(\YY)$,
    a type $\xxx \rightarrow_\ff \yyy$;
  whose terms are called \textit{displayed morphisms};
  \item for every $\XX : \BB$ and $\xxx : \DD(\XX)$, a displayed morphism of type $\xxx \rightarrow_{\id_\XX} \xxx$;
  \item for every $\fff : \xxx \rightarrow_\ff \yyy$ and $\ggg : \yyy \rightarrow_\gg \zzz$, a displayed morphism of type $\xxx \rightarrow_{\ff \cdot \gg} \zzz$.
\end{enumerate}
\end{definition}

\begin{remark}
\label{rem:dfn-bicat}
The definition of a displayed bicategory as presented in \cref{dfn:disp-bicat} is a simplified version as compared to \cite{AFMVW2021}.
The reason for this simplification is that for the displayed bicategories we consider, e.g., elementary topoi, no compatibility conditions on the natural transformations need to be required.
That is, the $2$-cells in the bicategory of elementary topoi are natural transformations.

From now on, we leave implicit that the displayed bicategories have contractible $2$-cells.
In particular, the bicategorical laws need not be verified since these are phrased in terms of equality of $2$-cells.
For instance, the associativity of composition, up to an invertible $2$-cells, follows directly from the statement for categories.
\end{remark}

\begin{definition}
[\coqident{Bicategories.DisplayedBicats.DispBicat}{total_bicat}]
\label{dfn:total-bicat}
The \dfn{total bicategory} of a displayed bicategory $\DD$, denoted $\int \DD$, is the bicategory whose objects are (dependent) pairs $(\XX, \xxx)$, with $\XX : \BB$ and $\xxx : \DD(\XX)$.
The morphisms are pairs $(\ff, \fff)$ with $\ff : \BB(\xxx,\yyy)$ and $\fff : \xxx \rightarrow_\ff \yyy$.
The $2$-cells are the $2$-cells in $\BB$.
\end{definition}
\begin{definition}
[\coqident{Bicategories.PseudoFunctors.Examples.Projection}{pr1_psfunctor}]
Let $\DD$ be a displayed bicategory over $\BB$.
We have a (strict) pseudofunctor $U : \int \DD \to \BB$, called the \dfn{forgetful pseudofunctor}.
\end{definition}

\begin{example}
[\coqident{Bicategories.DisplayedBicats.Examples.CategoriesWithStructure.Topoi}{disp_bicat_elementarytopoi'} The displayed bicategory of elementary topoi]
\label{exa:disp-bicat-of-top}
Elementary topoi are categories with a lot of structure, such as finite limits, exponentials, and a subobject classifier (for a definition of those structures see \cref{sec:on-rezk-for-top}).
We construct the bicategory of elementary topoi and logical functors as a displayed bicategory over $\CAT$ in multiple steps.
First, we define the displayed bicategory of finite limits, whose displayed objects equip a category with chosen finite limits, and whose displayed morphisms say that a functor preserves those limits.
Then the bicategory $\LCAT$ of finitely complete categories is the total bicategory of that displayed bicategory.
Second, we define analogously the displayed bicategory $\OLCAT$, resp.\, $\CCC$, of subobject classifiers and exponentials respectively, whose base bicategory is $\LCAT$.
The bicategory $\TOP$ is then constructed as the total bicategory of the product (\cite[Definition 6.6]{AFMVW2021}) of $\OLCAT$ and $\CCC$ over $\LCAT$.
\end{example}

\begin{remark}[the $\sum$ construction]
The framework presented below is developed to work for displayed bicategories over $\CAT$.
However, the displayed bicategory of elementary topoi (\cref{exa:disp-bicat-of-top}) as constructed there is a displayed bicategory over $\int_\CAT \DD$.
Nonetheless, if $\EE$ is a displayed bicategory over $\int_{\CAT} \DD$, the total bicategory $\int_{\int_\CAT \DD} \EE$ is equivalently a bicategory $\int_\CAT \sum_\DD \EE$ displayed over $\CAT$ by applying the $\sum$ construction for displayed bicategories (\cite[Definition 6.6]{AFMVW2021}).
Hence, by the displayed bicategory of elementary topoi, we refer to the one over $\CAT$.
\end{remark}

\begin{example}[The displayed bicategory of univalent elementary topoi and the restriction of displayed bicategories]
Univalent topoi are those topoi whose underlying category is univalent.
Hence, the bicategory $\univ{\TOP}$ of univalent topoi can be constructed as a full subbicategory of $\TOP$ (\cref{exa:disp-bicat-of-top}).
One way to construct $\univ{\TOP}$ is thus by restricting the displayed bicategory defining $\TOP$ to $\univ{\CAT}$.
More generally, if $\DD$ is a displayed bicategory over $\CAT$, we can restrict $\DD$ to a displayed bicategory $\univ{\DD}$ over $\univ{\CAT}$ where the displayed objects and morphisms are those in $\DD$; see  \coqident{Bicategories.DisplayedBicats.Examples.DispBicatOnCatToUniv}{disp_bicat_on_cat_to_univ_cat}.
\end{example}

Below, we sometimes use the following terminology.
Let $\DD$ be a displayed bicategory.
Then we say a $\DD$-structure on a category $\CC$ is a $\DD$-displayed object over $\CC$.

\subsection{Displayed Biadjoints}
\label{sec:disp-biadj}

Previously, we discussed that structured categories are the objects in a displayed bicategory $\DD$ over $\CAT$.
Furthermore, $\DD$ can be restricted to a displayed bicategory $\univ{\DD}$ over $\univ{\CAT}$.
That is, $\int \univ{\DD}$ is the full subbicategory of $\int \DD$ whose objects are univalent categories with a $\DD$-structure.
Hence, we have the following diagram of bicategories.
\begin{center}
\label{eqn:diagram:lifted-biadj}
\begin{tikzcd}
\int {\univ{\DD}}
	\arrow[d, swap, "U"]
& {\int \DD} \arrow[d, "U"] \\
\univ{\CAT}
\arrow[r, bend right, "\iota"] 
& \CAT \arrow[l, bend right, "\RC"]
\end{tikzcd}
\end{center}

We now reduce the problem of lifting the biadjunction $\RC \dashv \iota$ to the level of displayed bicategories.
First, we recall the notion of displayed pseudofunctor \cite{AFMVW2021} in \cref{dfn:disp-psfunctor}.
Second, we define (left) displayed universal arrows \cref{dfn:disp_left_universal_arrow}, which is specialized to $\RC \dashv \iota$ in \cref{sec:disp-RC}.

Observe that we have a pseudofunctor $\iotaiota$ from $\int \univ{\DD}$ to $\int \DD$ given by the identity.
Indeed, $\univ{\DD}$ is $\DD$ restricted to a full subbicategory and $\iota$ is a pseudofunctor.
More generally, we have that pseudofunctors $\mathcal{F}$ between total bicategories can often be constructed by a pseudofunctor $\mathcal{G}$ between the base bicategories and a \textit{displayed pseudofunctor} over $\mathcal{G}$:
\begin{definition}
[\coqident{Bicategories.DisplayedBicats.DispPseudofunctor}{disp_psfunctor}
\coqident{Bicategories.DisplayedBicats.DispPseudofunctor}{total_psfunctor}]
\label{dfn:disp-psfunctor}
Let $F : \BB_1 \to \BB_2$ be a pseudofunctor and $\DD_i$ a displayed bicategory over $\BB_i$, for $i=1,2$.
A \dfn{displayed pseudofunctor} $\hat{F}$ over $F$ consists of:
\begin{enumerate}
\item for every $\XX : \BB_1$, a function $\hat{F}_0 : \DD_1(\XX) \to \DD_2(F\,\XX)$;
\item for every $f : \BB_1(\XX, \YY)$ and $\xxx : \DD_1(\XX), \yyy : \DD_1(\YY)$, a function 
$\hat{F}_1 : (\XX \to_\ff \YY) \to (\hat{F}_0\,\xxx \to_{F\,\ff} \hat{F}_0\,\yyy)$;
\end{enumerate}
The assignment $(\XX, \xxx) \to (F\,\XX, \hat{F}\, \xxx)$ bundles into a pseudofunctor $\int_F \hat{F} : \int \DD_1 \to \int \DD_2$, and is referred to as the \dfn{total pseudofunctor}.
\end{definition}
For the full definition of displayed pseudofunctors, that is, without contractible $2$-cells, see \cite{AFMVW2021} or \coqident{Bicategories.DisplayedBicats.DispPseudofunctor}{disp_psfunctor}.

\begin{example}[Displayed pseudofunctor from $\univ{\DD}$ to $\DD$]
The inclusion pseudofunctor $\iotaiota$ from $\int \univ{\DD}$ to $\int \DD$ is a total pseudofunctor over $\iota$, where the displayed pseudofunctor, denoted $\jota$, acts trivially on the displayed objects and morphisms.
\end{example}

In \cref{sec:on-rezk-for-top}, we construct a left biadjoint to $\iotaiota$ for concrete $\DD$.
Analogous to the construction of pseudofunctors between total bicategories, biadjoints can be constructed in two steps.
That is, we first construct a biadjoint between the base bicategories, and then do a construction at the level of the displayed bicategories.
The ingredients used to express the construction at the displayed level are bundled into the definition of \dfn{displayed universal arrows} \cref{dfn:disp_left_universal_arrow}.

For the definition of displayed adjoint equivalence between ($1$-)categories, we refer the reader to the formalization \coqident{CategoryTheory.DisplayedCats.Equivalences}{is_equiv_over} (also see pages $8$-$9$ in \cite{AL2019}, and \cite{AFMVW2021} for such equivalences between displayed bicategories).
When the base biadjoint is given by $\RC$, and when considering displayed bicategories with contractible $2$-cells, the conditions of being a displayed adjoint equivalence become particularly simple (see \cref{sec:disp-RC}). Hence, we will not recall this notion.

\begin{definition}%
[\coqident{Bicategories.DisplayedBicats.DisplayedUniversalArrow}{disp_left_universal_arrow}]
\label{dfn:disp_left_universal_arrow}
Let $R$ be a pseudofunctor from $\BB_1$ to $\BB_2$ with a left biadjoint $(L, \eta)$.
Let $\hat{R}$ be a displayed pseudofunctor from $\DD_1$ to $\DD_2$, over $R$.
A \dfn{(left) displayed universal arrow} for $\hat{R}$ (over $(R, L, \eta)$) consists of:
\begin{enumerate}
\item $\hat{L} : \prod_{\XX : \BB_2} \DD_2(\XX) \to \DD_1(L\,\XX)$, we write $\hat{L}(\xxx) := \hat{L}(\XX,\xxx)$;
\item a family $\hat{\eta}_{\xxx} : \xxx \rightarrow_{\eta\,\XX} \hat{R}(\hat{L}(\xxx))$ of displayed morphisms, for all $(\XX, \xxx)$ in $\int \DD_2$;
\end{enumerate}
and such that the displayed functor between displayed hom-categories
\[
\hat{\eta}_{\xxx} \cdot \hat{R}(-) : \DD_1(\hat{L}\,\xxx, \yyy) \to \DD_2(\xxx, \hat{R}\,\yyy),
\]
is a displayed adjoint equivalence whose base of displayment is the adjoint equivalence $\eta_\XX \cdot R(-)$ given by $(R,L,\eta)$.
\end{definition}

\begin{proposition}
[\coqident{Bicategories.DisplayedBicats.DisplayedUniversalArrow}{total_left_universal_arrow}]
\label{prop:total-biadjunction}
Let $(\hat{R}, \hat{L}, \hat{\eta})$ be a displayed universal arrow for $(R,L,\eta)$.
Then the total pseudofunctor
$\int_R \hat{R} : \int_{\BB_1} \DD_1 \to \int_{\BB_2} \DD_2$,
has a left biadjoint given by $(\XX, \xxx) \mapsto (L\,\XX, \hat{L}\,\xxx)$ and whose unit is given by $(\eta_\XX, \hat{\eta}_{\xxx})$.
\end{proposition}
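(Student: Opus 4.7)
The plan is to unfold \cref{dfn:pseudofunctor}: to exhibit $\int_R \hat{R}$ as having a left biadjoint with object-assignment $(\XX, \xxx) \mapsto (L\,\XX, \hat{L}\,\xxx)$ and unit $(\eta_\XX, \hat{\eta}_{\xxx})$, what remains is to check that for every $(\XX, \xxx) : \int \DD_2$ and $(\YY, \yyy) : \int \DD_1$, the precomposition functor
\[
(\eta_\XX, \hat{\eta}_{\xxx}) \cdot (\textstyle\int_R \hat{R})(-) \colon \textstyle\int \DD_1 \bigl((L\,\XX, \hat{L}\,\xxx), (\YY, \yyy)\bigr) \to \textstyle\int \DD_2\bigl((\XX, \xxx), (R\,\YY, \hat{R}\,\yyy)\bigr)
\]
is an adjoint equivalence. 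Since adjoint equivalences between univalent hom-categories compose, I reduce this to combining two already-given pieces of data: the adjoint equivalence $\eta_\XX \cdot R(-)$ furnished by the base biadjunction $L \dashv R$, and the \emph{displayed} adjoint equivalence $\hat{\eta}_\xxx \cdot \hat{R}(-)$ over it, which is by hypothesis part of the data of the displayed universal arrow.

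The key structural observation is that, because our displayed bicategories are taken with contractible $2$-cells (\cref{rem:dfn-bicat}), the hom-category $\int \DD\bigl((\XX, \xxx), (\YY, \yyy)\bigr)$ in the total bicategory is precisely the total category of a $1$-categorical displayed category over $\BB(\XX, \YY)$: its displayed objects over $f$ are the displayed morphisms $\xxx \rightarrow_f \yyy$, and its displayed morphisms are trivial. Under this identification, the functor displayed above is exactly the total functor of the pair consisting of the base precomposition $\eta_\XX \cdot R(-)$ and the displayed precomposition $\hat{\eta}_\xxx \cdot \hat{R}(-)$. Hence I can invoke the standard lemma that the total of a displayed adjoint equivalence over an adjoint equivalence of base categories is itself an adjoint equivalence (this is \coqident{CategoryTheory.DisplayedCats.Equivalences}{is_equiv_over} applied at the total level), and conclude.

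Concretely, the steps I would carry out, in order, are: (i) define $\int_R \hat{R}$ on objects and the unit componentwise, which is purely bookkeeping; (ii) state and apply the identification of $\int \DD$-hom-categories with $1$-categorical total categories over base hom-categories, which is where contractibility of $2$-cells is used to absorb all bicategorical coherence; (iii) factor the precomposition functor through this identification as the total of a base and a displayed functor; (iv) conclude that it is an adjoint equivalence by pairing the base adjoint equivalence from $L \dashv R$ with the displayed one from the definition of displayed universal arrow.

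The main obstacle I anticipate is step (ii): one has to check carefully that the isomorphism of categories between $\int \DD\bigl((\XX, \xxx), (\YY, \yyy)\bigr)$ and the total of a displayed category over $\BB(\XX, \YY)$ is strictly natural in a way that makes the precomposition functor coincide with the total of its two components. In a formalization, this amounts to unfolding the definitions enough that the displayed-equivalence lemma can be applied up to equality (or a carefully-tracked isomorphism) of functors; once that identification is in place, all remaining work is covered by the displayed universal-arrow hypothesis and the established infrastructure for displayed adjoint equivalences in \cite{AFMVW2021,AL2019}.
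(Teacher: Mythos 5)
Your proof is correct and follows the intended route: the paper gives no written argument for \cref{prop:total-biadjunction} precisely because \cref{dfn:disp_left_universal_arrow} is engineered so that the hom-category equivalence for the total pseudofunctor is the total functor of the base adjoint equivalence $\eta_\XX \cdot R(-)$ and the displayed one $\hat{\eta}_\xxx \cdot \hat{R}(-)$, exactly as you describe. The only cosmetic slip is the phrase ``adjoint equivalences between univalent hom-categories compose'' --- no univalence of hom-categories is needed or available here, and what you actually use (correctly) is that the total of a displayed adjoint equivalence over an adjoint equivalence is an adjoint equivalence.
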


\subsection{Displayed Universal Arrows over the Rezk Completion}
\label{sec:disp-RC}

In \cref{sec:disp-biadj}, we defined the notion of displayed universal arrows, which, in particular, provides the right formulation for the modular construction of biadjoints.
Often, there are simplifications due to how the bicategories are defined, for instance, in the case of topoi, the structure related to $2$-cells becomes irrelevant.
Furthermore, $\RC$ is completely characterized through weak equivalences, which allows for a further simplification.
In this section, we specialize the notion of displayed universal arrows to take into account these considerations.
In particular, we conclude that for lifting the Rezk completion to a bicategory such as $\TOP$, the displayed universal arrows become particularly simple (\cref{prop:disp-univ-arrow-over-cat}).

Recall that structure on a category is represented via a displayed bicategory $\DD$ over $\CAT$, whose displayed objects witness the structure and whose displayed morphisms witness preservation of the structure.
Then the bicategory of such structured categories is the total bicategory $\int_\CAT \DD$.
Furthermore, its full subbicategory on univalent structured categories is $\int_{\univ{\CAT}} \univ{\DD}$.
Moreover, we have the following diagram:

\begin{center}
\label{eqn:diagram:lifted-biadj}
\begin{tikzcd}
\int {\univ{\DD}}
	\arrow[d, swap, "U"]
	\arrow[rr, bend right=22, "\int \jota"] 
&& {\int \DD} \arrow[d, "U"] \arrow[ll, dashed, bend right=22, "?", swap, pos=0.6] \\
\univ{\CAT}
\arrow[rr, bend right=22, "\iota"] 
&& \CAT \arrow[ll, bend right=22, "\RC", pos=0.7]
\end{tikzcd}
\end{center}

Hence, the construction of the Rezk completion for categories with a $\DD$-structure corresponds to the construction of a left biadjoint of the inclusion $\iotaiota = \int \jota$.
If such a left biadjoint exists, we say that the Rezk completion (for categories) \emph{lifts} to categories with structure.

By \cref{prop:total-biadjunction}, the construction of such a left biadjoint amounts to the construction of a displayed universal arrow for $\RC$.
That is, for every $\CC \in \CAT$, and every displayed object $x : \DD(\CC)$ on $\CC$ (i.e., $\CC$ has a given $\DD$-structure), we have to construct a $\DD$-structure on $\RC(\CC)$ and show that $\eta_\CC : \CC \to \RC(\CC)$ is $\DD$-structure preserving. 
How one can construct those in concrete cases is explained in \cref{sec:on-rezk-for-top}.
Furthermore, we have to show that the displayed precomposition functor $\hat{\eta}_{\xxx} \cdot \hat{R}(-)$, see \cref{dfn:disp_left_universal_arrow}, is indeed a displayed adjoint equivalence. 
In a bicategory such as $\TOP$, this equivalence condition reduces to the statement that if a structure-preserving functor $F : \CC \to \EE$, into a univalent category, factorizes through $\eta_\CC$, then the functor $\RC(\CC) \to \EE$ is also structure-preserving.
More generally:

\begin{proposition}
[\coqident{Bicategories.DisplayedBicats.DisplayedUniversalArrowOnCat}{make_disp_left_universal_arrow_if_contr_CAT_from_weak_equiv}]
\label{prop:disp-univ-arrow-over-cat}
Let $\DD$ be a displayed bicategory over $\CAT$ such that for every weak equivalence $G : \CC_0 \to \CC_1$, whose codomain is univalent, we have:
\begin{enumerate}
\item for $\xxx : \DD(\CC_0)$, there is a $\hat{x} : \DD(\CC_1)$ and a $\hat{G} : \xxx \to_G \hat{x}$;
\item for every univalent category $\CC_2$, functors $F : \CC_0 \to \CC_2$ and $H : \CC_1 \to \CC_2$, and a natural isomorphism $\alpha : G \cdot H \Rightarrow F$, if $\xxx_i : \DD(\CC_i)$ and $\hat{F} : \xxx_0 \to_F \xxx_2$, then there is a $\hat{H} : \xxx_1 \to_H \xxx_2$.
\end{enumerate}
Then the pseudofunctor $\RC : \CAT \to \univ{\CAT}$ lifts to a left biadjoint for $\int \univ{\DD} \to \int \DD$.
\end{proposition}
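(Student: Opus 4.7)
The strategy is to apply \cref{prop:total-biadjunction} to the displayed pseudofunctor $\jota : \univ{\DD} \to \DD$ over $\iota : \univ{\CAT} \to \CAT$, whose base has left biadjoint $\RC$ with unit $\eta$ by \cref{cor:rezk-as-leftadj}. The task then reduces to constructing a displayed universal arrow for $\jota$ over $(\iota, \RC, \eta)$, i.e.\ producing the data $(\hat{L}, \hat{\eta})$ of \cref{dfn:disp_left_universal_arrow} and verifying the displayed adjoint equivalence condition.

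First, I would define $\hat{L}$ and $\hat{\eta}$. For any $\CC : \CAT$ with a $\DD$-structure $\xxx : \DD(\CC)$, the unit $\eta_\CC : \CC \to \RC(\CC)$ is a weak equivalence into a univalent category. Applying hypothesis~(1) to $G := \eta_\CC$ yields a displayed object $\hat{x} : \DD(\RC\,\CC)$ together with a displayed morphism $\hat{\eta}_\xxx : \xxx \to_{\eta_\CC} \hat{x}$. Since $\univ{\DD}$ has, by definition, the same displayed objects as $\DD$ but restricted along the inclusion $\univ{\CAT} \hookrightarrow \CAT$, the object $\hat{x}$ is automatically a displayed object of $\univ{\DD}$ over $\RC\,\CC$. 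Set $\hat{L}(\xxx) := \hat{x}$.

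Second, I would verify that the displayed functor
\[
\hat{\eta}_\xxx \cdot \jota(-) : \univ{\DD}(\hat{L}\xxx, \yyy) \to \DD(\xxx, \jota\,\yyy),
\]
is a displayed adjoint equivalence over the adjoint equivalence $\eta_\CC \cdot (-) : \univ{\CAT}(\RC\,\CC, \EE) \to \CAT(\CC, \EE)$ provided by \cref{prop:precompiso}. Because $\DD$ has contractible $2$-cells (as discussed in \cref{rem:dfn-bicat}), the displayed hom-categories over a fixed $1$-cell in the base are (essentially) propositional, so the full-faithfulness portion of the displayed adjoint equivalence is automatic. What remains is displayed essential surjectivity: given $F : \CC \to \EE$ with $\EE$ univalent and a displayed morphism $\hat{F} : \xxx \to_F \jota\,\yyy$, one has to produce a displayed morphism $\hat{H}$ over the unique factorization $H : \RC\,\CC \to \EE$ of $F$ through $\eta_\CC$. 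The universal property of $\RC$ supplies such $H$ together with a natural isomorphism $\alpha : \eta_\CC \cdot H \Rightarrow F$; applying hypothesis~(2) to this data with $\xxx_0 := \xxx$, $\xxx_1 := \hat{L}\,\xxx$, $\xxx_2 := \yyy$ and $\hat{F}$ yields the required $\hat{H}$.

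The main obstacle is cleanly packaging the simplification afforded by contractible $2$-cells: one has to check that, in the presence of that assumption, a displayed adjoint equivalence over a base adjoint equivalence reduces exactly to the displayed essential surjectivity supplied by hypothesis~(2), without needing to produce displayed counits or triangulator data. Once this reduction is in place, \cref{prop:total-biadjunction} directly provides the left biadjoint to $\int \jota$, and unfolding the construction shows that the underlying functor on objects sends a $\DD$-structured category $(\CC, \xxx)$ to $(\RC\,\CC, \hat{L}\,\xxx)$, so that the displayed biadjoint genuinely lifts $\RC$ as required.
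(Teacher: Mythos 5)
Your proposal is correct and follows essentially the same route as the paper: apply \cref{prop:total-biadjunction} to the displayed pseudofunctor over $\iota$, obtain $(\hat{L},\hat{\eta})$ from hypothesis~(1) applied to the weak equivalence $\eta_\CC$, and use contractibility of the displayed $2$-cells to reduce the displayed adjoint equivalence condition to the displayed essential surjectivity supplied by hypothesis~(2). The reduction you flag as the ``main obstacle'' is exactly the content the paper delegates to the formalized lemma named in the proposition, so nothing further is missing.
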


\section{Examples}
\label{sec:on-rezk-for-top}

In this section, we lift the Rezk completion from categories to various categorical structures, resulting in the Rezk completion for (W-)topoi.
To construct the lifting, we use \cref{prop:disp-univ-arrow-over-cat}.
 
First, we consider categories equipped with finite limits in \cref{sec:RC-completeness}.
The lifting thereof is concluded in \cref{examples:limit-adjoint}.
Dually, we conclude the lifting of finite colimits in \cref{examples:colim-biadj}.
Furthermore, we discuss the lifting of infinite limits in \cref{lemma:AC-RC-inf}.

Afterwards, we consider elementary topoi.
In particular, we lift the Rezk completion to cartesian closed categories and categories with a subobject classifier.
The lifting thereof is concluded in \cref{examples:CCC-adjoint} and \cref{examples:eltopoi-adjoint} respectively.
Furthermore, in \cref{sec:RC:TOPN}, we consider the case where a category is equipped with a parameterized natural numbers object, which combined with the previous result, provides the Rezk completion for $W$-topoi.

\subsection{Finitely Complete Categories}
\label{sec:RC-completeness}

We now lift the Rezk completion from categories to finitely (co)complete categories.
It is well-known that the existence of finite limits follows from the existence of finite products and equalizers.
Hence, it is sufficient to lift the Rezk completion for those limits.
We discuss the lifting of binary products and refer the reader to the formalization for the lifting of terminal objects and equalizers.

\begin{definition}
[\coqident{Bicategories.DisplayedBicats.Examples.CategoriesWithStructure.BinProducts}{disp_bicat_binproducts}]
\label{dfn:catswithprod}
Let $\DD^{\times}$ be the $\CAT$-displayed bicategory whose displayed objects over $\CC$ witness that $\CC$ is equipped with binary products and whose displayed morphisms over a functor $F : \CC \to \DD$ witness that $F$ preserves binary products up to isomorphism.
The total bicategory, denoted $\CAT^\times$, is the bicategory of categories with binary products and product-preserving functors.
\end{definition}

Hence, the goal of this subsection is to lift $\RC : \CAT \to \univ{\CAT}$ to a left biadjoint for $\univ{\CAT^\times} \hookrightarrow  \CAT^\times$.
By \cref{prop:disp-univ-arrow-over-cat}, it suffices to prove:
\begin{theorem}
\label{thm:RC-for-prod}
Let $\CC$ be a category with binary products.
Then
\begin{enumerate}
\item\label{RC:prod:create} the Rezk completion $\RC(\CC)$ of $\CC$ is equipped with binary products;
\item\label{RC:prod:preserve} the weak equivalence $\eta_\CC : \CC \to \RC(\CC)$ preserves binary products.
\end{enumerate}
Furthermore,
let $\EE$ be a univalent category with binary products and $F : \CC \to \EE$ a functor preserving binary products up to isomorphism.
Suppose $F$ factorizes through $\eta_\CC$ as depicted in the following diagram.
\begin{equation}
\label{eqn:goal_pb_liftpreservation-alpha}
\begin{tikzcd}
  {\CC} \arrow[rr, "\eta_\CC"] & & {\RC(\CC)} \arrow[ld, "H"]\\
  & {\EE} &
  \arrow[""{name=0, anchor=center, inner sep=0}, swap, "{F}", from=1-1, to=2-2]
  \arrow["\alpha", shorten <=12pt, shorten >=12pt, Leftrightarrow, from=0, to=1-3]
\end{tikzcd}
\end{equation}
Then $H$ preserves binary products.
\end{theorem}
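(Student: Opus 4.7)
The plan is to exploit two facts simultaneously: $\eta_\CC$ is fully faithful and essentially surjective (\cref{dfn:weak-equiv}), while $\RC(\CC)$ and $\EE$ are univalent, so for any pair of objects in such a category the type ``being a binary product of those two objects'' is a proposition. This propositionality is what lets us strip the mere existence coming from essential surjectivity at each step.

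For \cref{RC:prod:create,RC:prod:preserve}: given $a, b : \RC(\CC)$ we must construct a product. Since products, if they exist, are unique up to unique isomorphism and $\RC(\CC)$ is univalent, the type of binary products of $a$ and $b$ in $\RC(\CC)$ is a proposition, so it suffices to exhibit one merely. Essential surjectivity gives $x, y : \CC$ with $\eta_\CC(x) \cong a$ and $\eta_\CC(y) \cong b$; since products transport along isomorphisms of the factors, it suffices to show that $\eta_\CC(x \times y)$ with projections $\eta_\CC(\pi_1), \eta_\CC(\pi_2)$ is a product of $\eta_\CC(x), \eta_\CC(y)$ in $\RC(\CC)$. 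Given any $z : \RC(\CC)$ and morphisms $z \to \eta_\CC(x)$, $z \to \eta_\CC(y)$, existence and uniqueness of the mediating map is a proposition, so we may again invoke essential surjectivity to assume $z \cong \eta_\CC(w)$ for some $w : \CC$, transfer the problem to $\CC$ via fullness and faithfulness of $\eta_\CC$, and solve it using the universal property of $x \times y$ in $\CC$. The same construction simultaneously establishes \cref{RC:prod:preserve}: the product of $\eta_\CC(x), \eta_\CC(y)$ is (canonically isomorphic to) $\eta_\CC(x \times y)$, so $\eta_\CC$ preserves binary products up to isomorphism.

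For the final statement, we want that for every binary product $(p, p_1, p_2)$ of $a, b$ in $\RC(\CC)$, the cone $(H(p), H(p_1), H(p_2))$ is a binary product of $H(a), H(b)$ in $\EE$. Since $\EE$ is univalent, this is again a proposition in $a, b$, and products are unique up to unique isomorphism, so by essential surjectivity we may assume $a = \eta_\CC(x)$, $b = \eta_\CC(y)$, and that the chosen product is the one built above, namely $\eta_\CC(x \times y)$ with projections $\eta_\CC(\pi_i)$. The natural isomorphism $\alpha$ then identifies $(H(\eta_\CC(x \times y)), H(\eta_\CC(\pi_1)), H(\eta_\CC(\pi_2)))$ with $(F(x \times y), F(\pi_1), F(\pi_2))$, which is a product of $F(x), F(y)$ because $F$ preserves binary products, and hence a product of $H(a), H(b)$ via $\alpha$ once more.

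The main obstacle is careful bookkeeping of propositional truncations: every use of essential surjectivity of $\eta_\CC$ must land in a proposition, and it is precisely univalence of the codomain ($\RC(\CC)$ or $\EE$), together with uniqueness-up-to-unique-isomorphism of products, that makes the relevant types propositional. A subsidiary point is naturality of $\alpha$ with respect to $\eta_\CC(\pi_1)$ and $\eta_\CC(\pi_2)$, which ensures that $\alpha$ is compatible with the projections when transferring the candidate product cone from $H \circ \eta_\CC$ to $F$.
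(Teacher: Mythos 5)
Your proof is correct and follows essentially the same route as the paper: use propositionality of the goal (univalence of $\RC(\CC)$ for existence of products, ``being a product is a proposition'' for the rest) to strip the truncations from essential surjectivity, and fully faithfulness to transfer mediating morphisms back to $\CC$ — this is exactly the content of \cref{prop:weq-preserve-binprod,prop:RC-has-prod,lemma:lift-preservation-prod}. The only cosmetic difference is in the ``furthermore'' clause, where the paper factors the argument through an explicit reflection lemma (\cref{lemma:weq-reflect-prod}: fully faithful functors reflect products), while you instead replace the given product cone in $\RC(\CC)$ by the isomorphic cone $\eta_\CC(x\times y)$ using uniqueness of products; both amount to the same computation.
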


The proof of \cref{thm:RC-for-prod} happens in two steps.
First, we show in \cref{prop:RC-has-prod} that \cref{RC:prod:create} follows from the statement that weak equivalences preserve binary products (\cref{prop:weq-preserve-binprod}) which is precisely \cref{RC:prod:preserve}.
That is, the image of a product is a product.
Second, we show in \cref{lemma:lift-preservation-prod} that the last statement follows from the statement that weak equivalences reflect binary products.

We write products for binary products.

\begin{subparagraph}{The creation of products}
We now show that if $\CC$ is a category with products, then $\RC(\CC)$ has products and those are preserved by $\eta_\CC$.
First, we show that weak equivalences preserve products.
\begin{proposition}
[\coqident{CategoryTheory.WeakEquivalences.Preservation.Binproducts}{weak_equiv_preserves_binproducts}]
\label{prop:weq-preserve-binprod}
Let $G : \CC \to \DD$ be a weak equivalence and $(x_1 \xrightarrow{\pi_1} p \xleftarrow{\pi_2} x_2)$ a product in $\CC$.
Then $(G\,x_1 \xleftarrow{G\,\pi_1} G\,p \xrightarrow{G\,\pi_2} G\,x_2)$ is a product in $\DD$.
\end{proposition}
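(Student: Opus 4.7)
The plan is to verify the universal property of $(G\,x_1 \xleftarrow{G\,\pi_1} G\,p \xrightarrow{G\,\pi_2} G\,x_2)$ directly, exploiting that ``being a product'' is a proposition. The key reduction is to replace an arbitrary test object $y : \DD$ with an object in the image of $G$, which is where essential surjectivity enters; the rest is an application of full faithfulness together with the universal property already available in $\CC$.

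First I would fix a cone $(f_1 : y \to G\,x_1,\ f_2 : y \to G\,x_2)$ in $\DD$ and aim to construct a unique mediator $h : y \to G\,p$ with $h \cdot G\,\pi_i = f_i$. Since the existence-and-uniqueness of such $h$ is a proposition, essential surjectivity of $G$ lets me assume an object $x : \CC$ together with an isomorphism $\phi : G\,x \cong y$, even though this data is only merely given. This step is the only use of essential surjectivity, and the propositional character of the goal makes it legitimate.

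Second, for existence, I apply full faithfulness to the composites $\phi \cdot f_i : G\,x \to G\,x_i$, obtaining unique $g_i : x \to x_i$ with $G\,g_i = \phi \cdot f_i$. The product structure of $p$ in $\CC$ then yields a unique $g : x \to p$ with $g \cdot \pi_i = g_i$, and I take $h := \phi^{-1} \cdot G\,g$ as the candidate mediator in $\DD$. A short diagram chase using functoriality of $G$ gives $h \cdot G\,\pi_i = \phi^{-1} \cdot G(g \cdot \pi_i) = \phi^{-1} \cdot G\,g_i = \phi^{-1} \cdot \phi \cdot f_i = f_i$.

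Third, for uniqueness, given any two mediators $h, h' : y \to G\,p$, fullness of $G$ produces $k, k' : x \to p$ with $G\,k = \phi \cdot h$ and $G\,k' = \phi \cdot h'$; the assumption that $h$ and $h'$ both factor the cone, combined with faithfulness of $G$, forces $k \cdot \pi_i = g_i = k' \cdot \pi_i$. The uniqueness clause of the product in $\CC$ now gives $k = g = k'$, hence $\phi \cdot h = \phi \cdot h'$, and finally $h = h'$ since $\phi$ is invertible. The only subtle point in the whole argument is the propositional elimination of essential surjectivity; this is available precisely because the goal is a proposition, and I do not anticipate any further obstacle beyond routine bookkeeping of the diagrams mediated by $\phi$.
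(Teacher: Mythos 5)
Your proposal is correct and follows essentially the same route as the paper's proof: phrase the universal property as a contractibility statement (hence a proposition), use essential surjectivity to replace $y$ by an object $G\,x$ with an isomorphism, transfer the cone to $\CC$ via full faithfulness, mediate there, and transport back, with uniqueness reduced to the product's uniqueness in $\CC$ via faithfulness. The only cosmetic difference is that the paper phrases uniqueness as equality of $G^{-1}(i\cdot h_1)$ and $G^{-1}(i\cdot h_2)$ rather than introducing named preimages $k,k'$, which is the same argument.
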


\begin{proof}
Let $(G\,x_1 \xrightarrow{g_1} y \xleftarrow{g_2} G\,x_2)$ be a cone in $\DD$.
We have to construct a morphism $h : \DD(y, G\,p)$ which is unique w.r.t., to the property that for $k = 1,2$, the following diagram commutes:
\begin{equation}
\label{eqn:weq-preserve-binprod-proj}
\begin{tikzcd}
& y \arrow[d, "h"] \arrow[dl, "g_k", swap] \\
{G\,x_k} & {G\,p} \arrow[l, "G\,\pi_k"] 
\end{tikzcd}
\end{equation}

Type-theoretically, we need to show that the following type is contractible:
\begin{equation}
\label{eqn:type-witnessing-h}
A := \sum\, h : \DD(y, G\,p),\, h \cdot G(\pi_1) = g_1 \times h \cdot G(\pi_2) = g_2.
\end{equation}
To show that $A$ is inhabited, we use essential surjectivity of $G$ to obtain an $x : \CC$ and an isomorphism $i : G\,x \cong y$.
Note that we can construct $x$ and $i$, because being contractible is a proposition.

We claim that $h$ is given by $y \xrightarrow{i^{-1}} {G\,x} \xrightarrow{G(\langle f_1, f_2 \rangle)} G\,p$, where $f_k := G^{-1}(i \cdot g_k)$ and $\langle f_1, f_2 \rangle$ is the unique morphism given by the universal property of $p$.

That $h$ is an inhabitant of $A$ follows from the following calculation (\coqident{CategoryTheory.WeakEquivalences.Preservation.Binproducts}{weak_equiv_preserves_binproducts_pr1}):
\begin{eqnarray*}
h \cdot G(\pi_k) =& i^{-1} \cdot G(\langle f_1, f_2 \rangle) \cdot G(\pi_k),
	&\quad \text{ by dfn. of $h$}, \\
=& i^{-1} \cdot G(\langle f_1, f_2 \rangle \cdot \pi_k), 
	&\quad \text{ by functoriality of $G$},\\
=& i^{-1} \cdot G(f_k) , 
	&\quad \text{by dfn. of $\langle f_1, f_2 \rangle$} \\ 
=& i^{-1} \cdot G(G^{-1}(i \cdot g_k)) = g_1, 
	&\quad \text{ by dfn. of $f_k$}.\\
\end{eqnarray*}

It remains to verify that the type $A$ is a proposition (\coqident{CategoryTheory.WeakEquivalences.Preservation.Binproducts}{weak_equiv_preserves_binproducts_unique}).
Let $\phi_1 := (h_1, q_1)$ and $\phi_2 := (h_2, q_2)$ be in $A$.
Since $A$ is a subtype of $\DD(y, G\,p)$, it suffices to show that $h_1 = h_2$.
This is equivalent to showing $G^{-1}(i \cdot h_1) = G^{-1}(i \cdot h_2)$ since $i$ is an isomorphism and a fully faithful functor reflects equality of morphisms.
The result now follows by the universal property of the product and by assumption we have $h_1 \cdot G(\pi_k) = h_2 \cdot G(\pi_k)$.
\end{proof}

Observe that \cref{prop:weq-preserve-binprod} entails that if $\CC$ and $\DD$ are equipped with binary products, and if $\DD$ is univalent (i.e., $\DD = \RC(\CC)$), then $G(x \times_\CC y) = G(x) \times_\DD G(y)$.

Before concluding that $\RC(\CC)$ is equipped with products if $\CC$ is, observe:
\begin{lemma}
[\coqident{CategoryTheory.Limits.BinProducts}{isaprop_BinProduct}]
In a univalent category, products are unique up to identity, if they exist.
\end{lemma}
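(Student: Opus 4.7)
The plan is to show that the type $\mathsf{BinProduct}_\CC(x,y)$ of binary products of two fixed objects $x, y : \CC$ is a proposition. Recall that a binary product structure consists of an object $p : \CC$, projections $\pi_1 : p \to x$ and $\pi_2 : p \to y$, and a witness that $(p, \pi_1, \pi_2)$ is universal; the universal property asserts contractibility of a certain type of factorizations and is therefore itself a proposition. Hence the type of products is equivalent to a sigma type whose last component is propositional, and it suffices to produce an equality of the first three components from any two binary products.

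Given two candidate products $(p, \pi_1, \pi_2)$ and $(p', \pi_1', \pi_2')$, I would first apply the universal property of one to the cone formed by the projections of the other, in both directions, to produce morphisms $f : p \to p'$ and $g : p' \to p$ that commute with the projections. The uniqueness clauses of the universal properties then force $f \cdot g = \id_p$ and $g \cdot f = \id_{p'}$, so $f$ lifts to an isomorphism $i : p \cong p'$ satisfying $i \cdot \pi_k' = \pi_k$ for $k = 1, 2$.

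Next, I would invoke univalence of $\CC$: since $\idtoiso_{p,p'}$ is an equivalence, the isomorphism $i$ arises from a unique equality $e : p = p'$ with $\idtoiso(e) = i$. Using the standard lemmas in \UniMath that relate $\idtoiso$ with the transport of morphisms along equalities of objects, the identity $i \cdot \pi_k' = \pi_k$ translates into the statement that transporting $\pi_k$ along $e$ yields $\pi_k'$. The universal-property witness transports trivially since it lives in a proposition. Bundling these transports gives the desired identity of binary products.

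The argument is essentially a routine application of univalence to the classical fact that objects representing a universal property are isomorphic via a unique compatible isomorphism. The only slightly delicate step is the bookkeeping connecting the iso $i$, the equality $e$ obtained from univalence, and the transport of the projections, but this is handled by standard $\idtoiso$-computation lemmas and presents no conceptual obstacle.
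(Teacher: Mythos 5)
Your proof is correct and is precisely the standard argument underlying the formalized lemma \coqident{CategoryTheory.Limits.BinProducts}{isaprop_BinProduct}, whose proof the paper omits: reduce propositionality of the sigma type to the propositional universal-property component, extract the unique projection-compatible isomorphism between any two products, and convert it to an identity via univalence together with the $\idtoiso$-transport lemmas. No gaps.
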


\begin{proposition}
[\coqident{CategoryTheory.WeakEquivalences.Creation.BinProducts}{weak_equiv_into_univ_creates_binproducts}]
\label{prop:RC-has-prod}
If $\CC$ is equipped with binary products, then $\RC(\CC)$ is equipped with binary products.
\end{proposition}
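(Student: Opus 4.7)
The plan is to combine essential surjectivity of $\eta_\CC$ with Proposition~\ref{prop:weq-preserve-binprod} and univalence of $\RC(\CC)$. Given $a, b : \ob{\RC(\CC)}$, I want to exhibit a binary product for them in $\RC(\CC)$. The crucial preliminary observation is that the type ``$a$ and $b$ admit a binary product in $\RC(\CC)$'' is itself a proposition: by the preceding lemma, binary products in a univalent category are unique up to identity, so this type is either empty or contractible. This propositionality is what will license elimination from the propositional truncation implicit in essential surjectivity in the steps below.

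With that in place, essential surjectivity of $\eta_\CC$ applied twice gives merely $x, y : \ob{\CC}$ together with isomorphisms $i : \eta_\CC(x) \cong a$ and $j : \eta_\CC(y) \cong b$, and since the goal is propositional I may treat these witnesses as given. In $\CC$ I form the chosen product $x \xleftarrow{\pi_1} x \times y \xrightarrow{\pi_2} y$, and Proposition~\ref{prop:weq-preserve-binprod} promotes its image to a product of $\eta_\CC(x)$ and $\eta_\CC(y)$ in $\RC(\CC)$. Finally, post-composing the image projections with $i$ and $j$ --- which is just the routine fact that binary products are stable under isomorphism of the factors --- yields a binary product diagram over $(a, b)$ with apex $\eta_\CC(x \times y)$.

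I do not foresee any real obstacle: the single piece of real content is Proposition~\ref{prop:weq-preserve-binprod}, which has already been established. The only thing that requires care is the opening step --- recognising that being-a-product is a proposition in $\RC(\CC)$ --- since without it the mere existence returned by essential surjectivity could not be used to construct data. Conceptually, \cref{prop:RC-has-prod} is the clean univalent packaging of the preservation result Proposition~\ref{prop:weq-preserve-binprod}, and I expect the same template (propositional uniqueness of the structure in the univalent target, plus preservation by weak equivalences) to handle terminal objects and equalizers with only cosmetic changes.
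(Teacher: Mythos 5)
Your proposal is correct and follows exactly the same route as the paper's proof: use univalence of $\RC(\CC)$ to make the goal propositional (via uniqueness of products), untruncate the essential surjectivity of $\eta_\CC$ to get preimages, apply \cref{prop:weq-preserve-binprod}, and transport along the isomorphisms by closure of products under isomorphism. No gaps; your extra emphasis on why propositionality licenses the truncation elimination is the same point the paper makes, just spelled out more explicitly.
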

\begin{proof}
Let $y_1, y_2 : \RC(\CC)$.
We have to construct the binary product of $y_1$ and $y_2$.
By univalence of $\RC(\CC)$, the product of $y_1$ and $y_2$ is necessarily unique, if it exists.
Hence, we can use essential surjectivity of $\eta_\CC$ to construct $x_1$ and $x_2$ such that $\eta_\CC(x_k) \cong y_k$.
Since the product is closed under isomorphism, it suffices to construct the product of $G(x_1)$ and $G(x_2)$, which we have constructed in \cref{prop:weq-preserve-binprod}.
\end{proof}
\end{subparagraph}

\begin{subparagraph}{The universality}
The final step in lifting $\RC$ to categories with binary products is the furthermore clause in \cref{thm:RC-for-prod}.
Thus, consider the following diagram of categories:
\begin{equation}
\label{eqn:goal_pb_liftpreservation-alpha}
\begin{tikzcd}
  {\CC} \arrow[rr, "G"] & & {\DD} \arrow[ld, "H"]\\
  & {\EE} &
  \arrow[""{name=0, anchor=center, inner sep=0}, swap, "{F}", from=1-1, to=2-2]
  \arrow["\alpha", shorten <=12pt, shorten >=12pt, Leftrightarrow, from=0, to=1-3]
\end{tikzcd}
\end{equation}
We have to show that if $G$ is a weak equivalence and if $F$ preserves binary products, then $H$ preserves binary products.

\begin{lemma}
[\coqident{CategoryTheory.WeakEquivalences.Reflection.BinProducts}{weak_equiv_reflects_products}]
\label{lemma:weq-reflect-prod}
A fully faithful functor reflects products.
\end{lemma}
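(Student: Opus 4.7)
The plan is to unfold the universal property of a product directly, using the two halves of full faithfulness in complementary roles: fullness to produce the mediating morphism, and faithfulness to transport equations back from $\DD$ to $\CC$.

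Concretely, suppose $F : \CC \to \DD$ is fully faithful and $(x_1 \xleftarrow{\pi_1} p \xrightarrow{\pi_2} x_2)$ is a cone in $\CC$ whose image $(F x_1 \xleftarrow{F\pi_1} F p \xrightarrow{F\pi_2} F x_2)$ is a product in $\DD$. I would fix an arbitrary test cone $(x_1 \xleftarrow{f_1} y \xrightarrow{f_2} x_2)$ in $\CC$ and show that the type
\[
A := \sum h : \CC(y, p),\, h \cdot \pi_1 = f_1 \times h \cdot \pi_2 = f_2
\]
is contractible, mirroring the presentation in the proof of \cref{prop:weq-preserve-binprod}.

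For inhabitation, apply $F$ to get the cone $(F f_1, F f_2)$ over $(F x_1, F x_2)$ in $\DD$. The assumed universal property of $F p$ produces a unique $\tilde h : F y \to F p$ with $\tilde h \cdot F \pi_k = F f_k$ for $k = 1, 2$. By fullness of $F$, pick $h : \CC(y, p)$ with $F h = \tilde h$. The equations $h \cdot \pi_k = f_k$ then follow from faithfulness, since $F(h \cdot \pi_k) = F h \cdot F \pi_k = \tilde h \cdot F \pi_k = F f_k = F(f_k)$. For uniqueness, if $(h', q'_1, q'_2) : A$, then $F h' \cdot F \pi_k = F f_k$, so $F h' = \tilde h = F h$ by the uniqueness clause in $\DD$, and faithfulness yields $h' = h$; the propositionality of the equational components then upgrades this to an identity in $A$.

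There is no real obstacle here: the argument is essentially the standard categorical one, and the only subtlety is being careful that the target equations live in $\CC$ while the universal property we exploit lives in $\DD$, which is precisely what the two halves of full faithfulness are for. In contrast to \cref{prop:weq-preserve-binprod}, essential surjectivity plays no role, and the proof does not need to invoke the product in $\CC$ to construct the mediating morphism --- it is produced entirely downstairs in $\DD$ and then lifted.
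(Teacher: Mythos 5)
Your proof is correct and is exactly the standard argument that the paper leaves entirely to the formalization (no textual proof is given for this lemma): fullness --- which in this setting is the strong, equivalence-on-homs form, so choosing a preimage is unproblematic --- supplies the mediating morphism by lifting the one produced downstairs in $\DD$, and faithfulness transports both the cone equations and the uniqueness clause back to $\CC$. Your closing observations (essential surjectivity is unused, and no product in $\CC$ need be assumed) are also accurate.
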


\begin{lemma}
[\coqident{CategoryTheory.WeakEquivalences.LiftPreservation.BinProducts}{weak_equiv_lifts_preserves_binproducts}]
\label{lemma:lift-preservation-prod}
Assume that $G$ is a weak equivalence and that $F$ preserves binary products.
Then $H$ preserves binary products.
\end{lemma}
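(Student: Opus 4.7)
The plan is to reduce the statement about $H$ preserving a product in $\DD$ to a statement about $F$ preserving the corresponding product in $\CC$, using that $G$ both preserves (\cref{prop:weq-preserve-binprod}) and reflects (\cref{lemma:weq-reflect-prod}) binary products, and that $\alpha$ lets us replace $G\cdot H$ by $F$ up to natural isomorphism. Since being a product is a proposition (it is a subtype of a hom-type), we may freely use the mere existence provided by essential surjectivity of $G$.

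Concretely, I would first fix a product cone $y_1 \xleftarrow{\pi_1} p \xrightarrow{\pi_2} y_2$ in $\DD$ and aim to show that $H(y_1) \xleftarrow{H\pi_1} H(p) \xrightarrow{H\pi_2} H(y_2)$ is a product cone in $\EE$. Using essential surjectivity of $G$, pick $x_1,x_2,x' : \CC$ together with isomorphisms $i_k : G(x_k) \cong y_k$ and $i_p : G(x') \cong p$. Transporting the projections $\pi_k$ along these isomorphisms yields morphisms $G(x') \to G(x_k)$ in $\DD$, which by full faithfulness of $G$ come from unique morphisms $f_k : x' \to x_k$ in $\CC$. Since products are preserved under pre/postcomposition with isomorphisms, the cone $(G(x') \xrightarrow{G(f_k)} G(x_k))_{k=1,2}$ is still a product cone in $\DD$.

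Next, I would invoke \cref{lemma:weq-reflect-prod}: since $G$ is fully faithful and the image cone is a product in $\DD$, the original cone $(x' \xrightarrow{f_k} x_k)_{k=1,2}$ is a product cone in $\CC$. Because $F$ preserves binary products, $(F(x') \xrightarrow{F(f_k)} F(x_k))_{k=1,2}$ is a product cone in $\EE$. Finally, the natural isomorphism $\alpha : G \cdot H \Rightarrow F$ provides a componentwise isomorphism $\alpha_z : H(G(z)) \cong F(z)$ which is natural in $z$, so the cone $(H(G(x')) \xrightarrow{H(G(f_k))} H(G(x_k)))_{k=1,2}$ is isomorphic (as a cone) to the product cone under $F$, hence is itself a product cone in $\EE$. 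Postcomposing with $H(i_p)^{-1}$ and $H(i_k)$ then exhibits $(H(y_1) \xleftarrow{H\pi_1} H(p) \xrightarrow{H\pi_2} H(y_2))$ as isomorphic, as a cone, to this product, which finishes the argument.

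The main obstacle is bookkeeping: one must check that after all the transports along $i_k$, $i_p$, and $\alpha$, the two cones really do agree up to the chosen isomorphism. This reduces to functoriality of $H$, naturality of $\alpha$, and the defining equations of $f_k$ obtained from full faithfulness of $G$. Conceptually nothing deep happens here, but the diagram chase is the only nontrivial part; everything else is an immediate appeal to \cref{prop:weq-preserve-binprod,lemma:weq-reflect-prod} and closure of products under isomorphism.
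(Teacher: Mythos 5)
Your proposal is correct and follows essentially the same route as the paper's proof: untruncate the essential-surjectivity data (legitimate because being a product is a proposition), use full faithfulness and \cref{lemma:weq-reflect-prod} to obtain a product cone in $\CC$, push it forward along $F$, and transfer back to $H$ via $\alpha$ and closure of products under isomorphism. The only difference is that you spell out the bookkeeping (the morphisms $f_k$ obtained by full faithfulness and the cone isomorphisms) that the paper leaves implicit.
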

\begin{proof}
We have to show that if $(y_1 \xrightarrow{\pi_1} p_y \xleftarrow{\pi_2} y_2)$ is a product cone in $\DD$, then $(H\,y_1 \xrightarrow{H\,\pi_1} H\,p_y \xleftarrow{H(\pi_2)} H\,y_2)$ is a product in $\EE$.

Since $G$ is essential surjective, we have objects $x_1, x_2, p_x : \CC$ such that $G\,x_1 \cong y_1, G\,x_2 \cong y_2$ and $G\,p_x \cong p_y$.
Note that we can construct such objects and isomorphisms, because being a product is a proposition.
Now, since $\alpha$ is a natural isomorphism, we have isomorphisms $H(y_k) \cong F(x_k)$, for $k = 1,2$, and $H(p_y) \cong F(p_x)$.

By \cref{lemma:weq-reflect-prod}, $p_x$ is a product of $x_1$ and $x_2$.
Then by assumption on $F$, $F(p_x)$ is a product of $F(x_1)$ and $F(x_2)$.
The claim now follows since products are closed under isomorphism and because the following diagram commutes:
\[
\begin{tikzcd}
F(x_1) \arrow[r, "F(\pi_1)"] \arrow[d, "\cong"] & F(p_x) \arrow[d, "\cong"] & F(x_2) \arrow[l, "F(\pi_2)", swap] \arrow[d, "\cong"] \\
H(y_1) \arrow[r, swap, "H(\pi_1)"] & H(p_y) & H(y_2) \arrow[l, "H(\pi_2)"]
\end{tikzcd}
\]
\end{proof}

Hence, \cref{lemma:lift-preservation-prod} implies that the action of $\RC : \CAT \to \univ{\CAT}$ on morphisms (functors) preserves the preservation of products.
This concludes the proof of \cref{thm:RC-for-prod}.

\end{subparagraph}

\begin{subparagraph}{The lifted biadjunction}
We now conclude that the Rezk completion for categories lifts to categories with binary products.
Furthermore,  we conclude the same result for finite limits more generally, or more precisely: terminal objects and equalizers. In the formalization, we also consider pullbacks, because this is used for lifting the subobject classifier.

\begin{proposition}
[\coqident{Bicategories.RezkCompletions.StructuredCats.BinProducts}{disp_bicat_chosen_binproducts_has_Rezk_completions}]
\label{examples:pb-adjoint}
The inclusion $\univ{(\CAT^\times)} \hookrightarrow \CAT^\times$ has a left biadjoint.
\end{proposition}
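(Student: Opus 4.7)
The plan is to apply \cref{prop:disp-univ-arrow-over-cat} to the displayed bicategory $\DD^\times$ of \cref{dfn:catswithprod}. The inclusion $\univ{(\CAT^\times)} \hookrightarrow \CAT^\times$ is precisely the total pseudofunctor of the displayed inclusion $\jota : \univ{\DD^\times} \to \DD^\times$ living over $\iota : \univ{\CAT} \hookrightarrow \CAT$. Hence, producing a left biadjoint reduces to producing a displayed universal arrow over $(\iota, \RC, \eta)$, and by \cref{prop:disp-univ-arrow-over-cat} it suffices to verify the two simplified conditions stated there for weak equivalences into univalent categories.

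For condition~(1), given a weak equivalence $G : \CC_0 \to \CC_1$ with $\CC_1$ univalent and a chosen binary-product structure $\xxx$ on $\CC_0$, I need to equip $\CC_1$ with binary products $\hat{\xxx}$ and exhibit a displayed morphism $\hat{G} : \xxx \to_G \hat{\xxx}$, i.e., witness that $G$ preserves binary products. The construction of products on $\CC_1$ is \cref{prop:RC-has-prod}: its argument uses only essential surjectivity of $G$, univalence of $\CC_1$ (to know products are propositional), and closure of products under isomorphism, so it applies verbatim beyond the case $G = \eta_\CC$. Preservation by $G$ is \cref{prop:weq-preserve-binprod}, which is stated for any weak equivalence.

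For condition~(2), consider a triangle with a weak equivalence $G : \CC_0 \to \CC_1$, a univalent $\CC_2$ with binary products, a product-preserving $F : \CC_0 \to \CC_2$, a functor $H : \CC_1 \to \CC_2$, and a natural isomorphism $\alpha : G \cdot H \Rightarrow F$. \cref{lemma:lift-preservation-prod} gives exactly the required lift: it concludes that $H$ also preserves binary products. Together with the previous paragraph, \cref{prop:disp-univ-arrow-over-cat} then yields the left biadjoint; concretely, it acts on objects as $(\CC, \xxx) \mapsto (\RC(\CC), \hat{\xxx})$ with $\hat{\xxx}$ the binary-product structure produced in \cref{prop:RC-has-prod}, and its unit at $(\CC, \xxx)$ is $(\eta_\CC, \hat{\eta_\CC})$, where $\hat{\eta_\CC}$ is the preservation witness of \cref{prop:weq-preserve-binprod}.

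There is no real remaining obstacle: the conceptual content has been discharged by \cref{thm:RC-for-prod}, and the present statement is essentially a packaging argument that bundles its three items into the biadjunction framework of \cref{sec:disp-biadj,sec:disp-RC}. The same template will be reused, \emph{mutatis mutandis}, for terminal objects, equalizers, and pullbacks, and subsequently for the structures treated in \cref{sec:RC-CCC,sec:RC-TOP,sec:RC:TOPN}.
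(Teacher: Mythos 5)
Your proof is correct and takes essentially the same route as the paper: the paper's own proof is a one-line application of \cref{prop:disp-univ-arrow-over-cat} with the hypotheses discharged by \cref{prop:RC-has-prod} and \cref{lemma:lift-preservation-prod}. Your additional remark that \cref{prop:RC-has-prod} and \cref{prop:weq-preserve-binprod} apply to arbitrary weak equivalences into univalent categories (not just $\eta_\CC$) is exactly the generality the hypotheses of \cref{prop:disp-univ-arrow-over-cat} require.
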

\begin{proof}
The result follows from \cref{prop:disp-univ-arrow-over-cat} whose assumptions are proven in \cref{lemma:lift-preservation-prod,prop:RC-has-prod}.
\end{proof}

\begin{definition}
[\coqident{Bicategories.DisplayedBicats.Examples.CategoriesWithStructure.FiniteLimits}{disp_bicat_limits}]
\label{dfn:catswithlimits}
Let $\DD^{\mathsf{lim}}$ be the product of the $\DD^\times$ and the displayed bicategories $\DD^\mathsf{1}$ and $\DD^{\mathsf{eq}}$ whose displayed objects over a category $\CC$ witness that $\CC$ is equipped with a terminal object, resp.\, equalizers, and whose displayed morphisms over a functor witness the preservation of the terminal object and equalizers respectively.
The total bicategory of $\DD^{\mathsf{lim}}$, denoted $\LCAT$, is the bicategory of finitely complete categories and finite-limit preserving functors.
\end{definition}

\begin{proposition}%
[\coqident{Bicategories.RezkCompletions.StructuredCats.FiniteLimits}{disp_bicat_chosen_limits_has_RezkCompletion}]
\label{examples:limit-adjoint}
The inclusion $\univ{\LCAT} \hookrightarrow \LCAT$ has a left biadjoint.
\end{proposition}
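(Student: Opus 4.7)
The plan is to invoke \cref{prop:disp-univ-arrow-over-cat} applied to the displayed bicategory $\DD^{\mathsf{lim}}$ over $\CAT$. Since $\DD^{\mathsf{lim}}$ is the product over $\CAT$ of $\DD^\times$, $\DD^\mathsf{1}$, and $\DD^{\mathsf{eq}}$, a displayed object over $\CC$ is a triple consisting of a choice of binary products, a terminal object, and equalizers, and displayed morphisms are triples that preserve each piece. Hence the two hypotheses of \cref{prop:disp-univ-arrow-over-cat} factor componentwise, and it suffices to verify both clauses for each factor separately and then pair up the witnesses.

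For $\DD^\times$ this is \cref{examples:pb-adjoint}, which relies on \cref{prop:RC-has-prod} and \cref{lemma:lift-preservation-prod}. For $\DD^\mathsf{1}$ and $\DD^{\mathsf{eq}}$ I would follow the two-step template used for binary products. First, prove that weak equivalences preserve the limit in question: for a terminal object $t$ in $\CC$ and any $y : \RC(\CC)$, essential surjectivity provides $x : \CC$ with $\eta_\CC(x) \cong y$, whence full faithfulness of $\eta_\CC$ gives $\RC(\CC)(y, \eta_\CC(t)) \cong \CC(x, t)$, which is contractible; for an equalizer cone, one transports a candidate factoring across $\eta_\CC$ in the same style as the proof of \cref{prop:weq-preserve-binprod} and closes with propositionality of equalizer universality. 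Together with essential surjectivity and univalence of $\RC(\CC)$, this upgrades existence of the limit from $\CC$ to $\RC(\CC)$, as in \cref{prop:RC-has-prod}, establishing clause~(1) of \cref{prop:disp-univ-arrow-over-cat}.

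For clause~(2), I would reuse the mechanism of \cref{lemma:lift-preservation-prod}: a fully faithful functor reflects terminal objects and equalizers, so given the factorization diagram with $G$ a weak equivalence and $F$ preserving the limit, one pulls the limit back along $G$, applies preservation by $F$, and transports the result across the natural isomorphism $\alpha$ to conclude that $H$ preserves the limit. The only obstacle I anticipate is bookkeeping around the product of displayed bicategories, namely checking that the conditions of \cref{prop:disp-univ-arrow-over-cat} for the product are exactly the conjunction of those for the factors; this is straightforward because both clauses quantify pointwise over the displayed structure. Combining the three verifications yields a displayed universal arrow for $\DD^{\mathsf{lim}}$, which by \cref{prop:total-biadjunction} packages into the desired left biadjoint to $\univ{\LCAT} \hookrightarrow \LCAT$.
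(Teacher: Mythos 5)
Your proposal matches the paper's route exactly: the paper proves the binary-product case in detail (\cref{examples:pb-adjoint} via \cref{prop:RC-has-prod} and \cref{lemma:lift-preservation-prod}), notes that terminal objects and equalizers follow by the same preservation/creation/reflection template for weak equivalences, and combines the factors of the product displayed bicategory $\DD^{\mathsf{lim}}$ componentwise before invoking \cref{prop:disp-univ-arrow-over-cat}. Your sketches of the terminal-object and equalizer cases are the intended arguments, so the proof is correct and essentially identical in structure to the paper's.
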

\end{subparagraph}

\begin{subparagraph}{Colimits}
Previously, we've constructed the Rezk completion for finitely complete categories.
The duality between limits and colimits entail the same result for finitely cocomplete categories, as a consequence of the following lemma:

\begin{lemma}
[\coqident{CategoryTheory.WeakEquivalences.Opp}{opp_is_weak_equiv}]
Let $F : \CC \to \DD$ be a functor.
Then $F$ is a weak equivalence if and only $\op{F} : \op{\CC} \to \op{\DD}$ is a weak equivalence.
\end{lemma}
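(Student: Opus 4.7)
The plan is to unfold the definition of a weak equivalence and verify both defining properties — fully faithfulness and essential surjectivity — transfer to the opposite functor. Since $(-)^{\mathsf{op}}$ is involutive, it suffices to prove one implication: if $F$ is a weak equivalence, then so is $\op{F}$.

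For fully faithfulness, the key observation is that by definition of the opposite category, we have $\op{\CC}(x,y) = \CC(y,x)$ and similarly $\op{\DD}(\op{F}\,x, \op{F}\,y) = \DD(F\,y, F\,x)$, and the hom-action of $\op{F}$ on $\op{\CC}(x,y) \to \op{\DD}(\op{F}\,x, \op{F}\,y)$ is literally the hom-action of $F$ on $\CC(y,x) \to \DD(F\,y, F\,x)$. Hence the former is an equivalence of types whenever the latter is. So fully faithfulness of $F$ immediately yields fully faithfulness of $\op{F}$.

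For essential surjectivity, the main ingredient is that isomorphisms in $\op{\DD}$ correspond to isomorphisms in $\DD$: explicitly, an isomorphism of type $\op{F}\,x \cong y$ in $\op{\DD}$ is the same data as an isomorphism of type $y \cong F\,x$ in $\DD$, and hence can be inverted to an isomorphism $F\,x \cong y$ in $\DD$. Since propositional truncation commutes with such a reindexing, the mere existence of $x$ with $\op{F}\,x \cong y$ in $\op{\DD}$ is equivalent to the mere existence of $x$ with $F\,x \cong y$ in $\DD$. So essential surjectivity of $F$ transfers to essential surjectivity of $\op{F}$.

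Neither step presents a real obstacle — the proof is essentially a bookkeeping exercise, since both properties defining a weak equivalence are manifestly self-dual. The only subtlety worth mentioning is ensuring that the mere-existence version of essential surjectivity (as opposed to a chosen preimage) is preserved, which is immediate because the bijection between $\DD$-isomorphisms and $\op{\DD}$-isomorphisms is given by a concrete function and hence lifts through the propositional truncation functorially. The converse direction then follows by applying the forward direction to $\op{F} : \op{\CC} \to \op{\DD}$ and using that $\op{\op{F}} = F$.
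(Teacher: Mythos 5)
Your proof is correct and is exactly the argument intended here: the paper omits the proof entirely (deferring to the formalization), and the evident proof is precisely the duality bookkeeping you describe, with fully faithfulness transferring because the hom-action of $\op{F}$ on $\op{\CC}(x,y)$ is literally that of $F$ on $\CC(y,x)$, and essential surjectivity transferring by inverting isomorphisms under the propositional truncation. The reduction of the converse to the forward direction via involutivity of $\op{(-)}$ is also the standard move.
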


We define the bicategory $\CLCAT$ of finitely cocomplete categories similarly to the bicategory of finitely complete categories.
That is, as the product of displayed bicategories of binary coproducts and coequalizers \coqident{Bicategories.DisplayedBicats.Examples.CategoriesWithStructure.FiniteColimits}{disp_bicat_colimits}:

\begin{corollary}
[\coqident{Bicategories.RezkCompletions.StructuredCats.FiniteColimits}{disp_bicat_colimits_has_RezkCompletion}]
\label{examples:colim-biadj}
The inclusion $\univ{\CLCAT} \hookrightarrow \CLCAT$ has a left biadjoint.
\end{corollary}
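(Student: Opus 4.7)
The plan is to apply \cref{prop:disp-univ-arrow-over-cat} to the displayed bicategory over $\CAT$ underlying $\CLCAT$, which is the product of the displayed bicategories of initial objects, binary coproducts, and coequalizers. Since a product of displayed bicategories decomposes the two conditions of \cref{prop:disp-univ-arrow-over-cat} into one condition per factor, it suffices to verify those conditions for each colimit structure separately.

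The key observation is that each of these conditions for colimits is the formal dual of the corresponding statement for limits established in \cref{sec:RC-completeness}, and the opposite-functor lemma cited just above the corollary makes this dualization work cleanly for weak equivalences. For binary coproducts, for instance, I would prove duals of \cref{prop:weq-preserve-binprod}, \cref{prop:RC-has-prod}, and \cref{lemma:lift-preservation-prod} by reversing all arrows in the original arguments. Equivalently, one can transport each limit proof across the identification $\RC(\op{\CC}) \simeq \op{\RC(\CC)}$, which is forced by uniqueness of the Rezk completion together with the fact that $\op{\eta_\CC} : \op{\CC} \to \op{\RC(\CC)}$ is a weak equivalence into a univalent category. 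The cases of initial objects and coequalizers are handled identically.

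Once the two conditions of \cref{prop:disp-univ-arrow-over-cat} are verified for each factor, the proposition yields the desired left biadjoint to $\univ{\CLCAT} \hookrightarrow \CLCAT$. I do not expect a serious obstacle beyond bookkeeping: one must confirm that the $\op{-}$ construction correctly exchanges the displayed objects and morphisms of the limit and colimit displayed bicategories (so, for example, a chosen-binary-product structure on $\op{\CC}$ is the same data as a chosen-binary-coproduct structure on $\CC$, and preservation of one by $\op{F}$ corresponds to preservation of the other by $F$), so that the dualization of each lemma from \cref{sec:RC-completeness} goes through mechanically.
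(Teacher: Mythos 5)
Your proposal is correct and follows essentially the same route as the paper: the corollary is obtained by dualizing the finite-limit results of \cref{sec:RC-completeness} through the lemma that $F$ is a weak equivalence if and only if $\op{F}$ is, and then applying \cref{prop:disp-univ-arrow-over-cat} factor by factor to the product of the relevant displayed bicategories. The only cosmetic difference is in which colimit shapes are listed (the paper names binary coproducts and coequalizers), which does not affect the argument.
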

\end{subparagraph}

\begin{subparagraph}{Infinite limits}

Whereas the Rezk completion lifts to finitely complete categories, the same result does not need to hold for infinitary limits.

Recall that the axiom of $(n,m)$-choice denotes the statement \cite[Exer. 7.8.]{hottbook}
\[
\prod_{X : \SET} \prod_{Y : X \to \UU_n} \left(\prod_{x : X} \trunc{Y(x)}_m\right) \to \trunc{\prod_{x : X} Y(x)}_m,
\]
where $-1 \leq n,m \leq \infty$ and $\UU_n$ denotes the type of $n$-types.

\begin{lemma}
\label{lemma:AC-RC-inf}
Assume the axiom of $(\infty,-1)$-choice holds.
Let $\CC$ be a category equipped with limits indexed by a setcategory $\mathcal{A}$.
Then $\RC(\CC)$ has all $\mathcal{A}$-indexed limits.
\end{lemma}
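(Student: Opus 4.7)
The plan is to follow the same template used for binary products in \cref{prop:RC-has-prod}, with the axiom of $(\infty,-1)$-choice playing the role that finite case analysis played there. Fix a diagram $D : \mathcal{A} \to \RC(\CC)$. Since $\RC(\CC)$ is univalent, the type of limit cones over $D$ is a proposition, so we are free to reason under propositional truncation throughout.

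The first step is to pull $D$ back along $\eta_\CC$ to a diagram in $\CC$. For each object $a : \ob{\mathcal{A}}$, essential surjectivity of $\eta_\CC$ gives the mere existence of a pair $(x_a, i_a)$ with $x_a : \CC$ and $i_a : \eta_\CC(x_a) \cong D(a)$. Because $\mathcal{A}$ is a setcategory, $\ob{\mathcal{A}}$ is a set; applying the axiom of $(\infty,-1)$-choice to the family $a \mapsto \sum_{x : \CC} (\eta_\CC(x) \cong D(a))$ then merely produces an entire family $(x_a, i_a)_{a : \ob{\mathcal{A}}}$. Full faithfulness of $\eta_\CC$ lifts each morphism $\varphi : a \to b$ of $\mathcal{A}$ to the unique $\tilde{D}(\varphi) : x_a \to x_b$ in $\CC$ satisfying $\eta_\CC(\tilde{D}(\varphi)) = i_a \cdot D(\varphi) \cdot i_b^{-1}$. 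Uniqueness forces functoriality, yielding a diagram $\tilde{D} : \mathcal{A} \to \CC$ together with a natural isomorphism $\eta_\CC \circ \tilde{D} \cong D$ whose components are the $i_a$.

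The second step is to form the limit. By hypothesis, $\tilde{D}$ admits a limit cone $(\pi_a : L \to \tilde{D}(a))_{a : \mathcal{A}}$ in $\CC$. Applying $\eta_\CC$ and composing with the $i_a$ produces a cone over $D$ with apex $\eta_\CC(L)$. Its universal property is verified exactly as in \cref{prop:weq-preserve-binprod}: a competing cone with apex $Y : \RC(\CC)$ is pulled back through $\eta_\CC$ by using essential surjectivity to choose some $y : \CC$ with $\eta_\CC(y) \cong Y$ and full faithfulness to transport the cone's components to morphisms $y \to x_a$ in $\CC$; the universal property of $L$ in $\CC$ then furnishes the unique mediating morphism, which transports back to $\RC(\CC)$.

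The main obstacle lies in the first step, namely in turning the pointwise mere existence of preimages into a (truncated) choice of an honest family $(x_a, i_a)_{a : \ob{\mathcal{A}}}$. This is precisely where the setcategory hypothesis on $\mathcal{A}$ and the $(\infty,-1)$-choice principle are both essential: without them, one cannot assemble the pointwise data into a diagram in $\CC$ on which the hypothesis of $\mathcal{A}$-indexed limits in $\CC$ can be invoked. Once the family is fixed, the rest of the argument proceeds in parallel with the finite case.
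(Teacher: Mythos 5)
Your proposal is correct and follows essentially the same route as the paper: apply $(\infty,-1)$-choice to the family $a \mapsto \sum_{x : \CC}(\eta_\CC(x) \cong D(a))$, strip the truncation because the type of limits is a proposition by univalence of $\RC(\CC)$, take the limit of the lifted diagram in $\CC$, and push it forward along $\eta_\CC$ using preservation of limits by weak equivalences and closure of limits under isomorphism of diagrams. If anything, you are slightly more careful than the paper, which sidesteps the functoriality of the lifted diagram by treating it as discrete, whereas you construct its action on morphisms via full faithfulness.
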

\begin{proof}
Let $\mathcal{J} : \mathcal{A} \to \RC(\CC)$ be a diagram.
We want to conclude that $\mathcal{J}$ has a limit.
Let $X := \ob{\mathcal{A}}$ and $Y : \mathcal{A} \to \UU_\infty : a \mapsto \sum_{x : \CC} \mathcal{J}(a) \cong \eta_\CC(x)$.
Observe that $\mathcal{J}(a) \cong \eta_\CC(x)$ is a set, but that the homotopy level of $\ob{\CC}$ might indeed be infinite.

By choice, we have a function:
\[
\left(\prod_{a : \mathcal{A}} \trunc{\sum_{x : \CC} \mathcal{J}(a) \cong \eta_\CC(x)}\right) \to \trunc{\prod_{a : \mathcal{A}} \sum_{x : \CC} \mathcal{J}(a) \cong \eta_\CC(x)},
\]
Since $\eta_\CC$ is essentially surjective, the domain thereof is inhabited.
Hence, we are left to construct
\[
\trunc{\prod_{a : \mathcal{A}} \sum_{x : \CC} \mathcal{J}(a) \cong \eta_\CC(x)} \to \lim(\mathcal{J}),
\]
where $\lim(\mathcal{J})$ is the type of limits of $\mathcal{J}$.
Since $\lim(\mathcal{J})$ is a proposition by univalence of $\RC(\CC)$, we are given 
\[
k' : \prod_{a : \mathcal{A}} \sum_{x : \CC} \mathcal{J}(a) \cong \eta_\CC(x).
\]
In particular, we have a diagram $\mathcal{K} : \mathcal{A} \to \CC$, which by assumption on $\CC$ has a limit cone $\lim(\mathcal{K})$.
We claim that $\eta_\CC(\lim(\mathcal{K}))$ is a limit for $\mathcal{J}$.

First, $\eta_\CC$ preserves limits.
Hence, $\eta_\CC(\lim(\mathcal{K}))$ is a limit for the diagram $\mathcal{K} \cdot \eta_\CC$.
Since limits are closed under isomorphic diagrams, the result follows if $\mathcal{J} \cong \mathcal{K} \cdot \eta_\CC$.
Such an isomorphism is indeed given by $k'$.

Observe that the naturality and functoriality conditions need not be verified since we can assume that $\mathcal{J}$ is discrete.
\end{proof}

\end{subparagraph}

\subsection{Cartesian Closed Categories}
\label{sec:RC-CCC}

We now lift the Rezk completion to cartesian closed categories.
The approach taken therefore is the same as for the lifting of limits.
In particular, we rely on the results from \cref{sec:RC-completeness}.

Recall that an object $x : \CC$ in a category with binary products is \dfn{exponentiable} if $- \times x : \CC \to \CC$ has a right adjoint, which is denoted as $\expo{x}{(-)} : \CC \to \CC$.
That is, for every $y : \CC$ there are given an object $\expo{x}{y} : \CC$ and a morphism $\ev := \ev_{x,y} : \CC(\expo{x}{y} \times x, y)$ which is universal in the sense that $(\expo{x}{y}, \ev)$ is the terminal such pair.
The object $\expo{x}{y}$ is referred to as the \textit{exponential} of $x$ with $y$, and $\ev_{x,y}$ is referred to as the \textit{evaluation morphism}.
We say that a category is \dfn{cartesian closed} if it has all finite products and all exponentials.

Furthermore, recall that a binary product preserving functor $F : \CC \to \EE$ between cartesian closed categories is a \dfn{cartesian closed functor} which \dfn{preserves exponentials}.
That is, for every $x,y : \CC$ the unique morphism from $F(\expo{x}{y})$ to $\expo{F(x)}{F(y)}$ is an isomorphism.

Recall from \cref{dfn:catswithprod} that we denote by $\CAT^\times$ the bicategory of categories with binary products.
\begin{definition}
[\coqident{Bicategories.DisplayedBicats.Examples.CategoriesWithStructure.Exponentials}{disp_bicat_exponentials}]
Let $\DD^{\mathsf{exp}}$ be the $\CAT^\times$-displayed bicategory whose displayed objects over $CC$ witness that $\CC$ is cartesian closed and whose displayed morphisms witness the preservation of exponentials.
The total bicategory, denoted $\CCC$, is the bicategory of cartesian closed categories and cartesian closed functors.
\end{definition}

Hence, to lift the Rezk completion to cartesian closed categories, we lift $\RC : \CAT^\times \to \univ{\CAT}^\times$ to a left biadjoint for $\univ{\CCC} \hookrightarrow \CCC$.
Analogous to before, we use \cref{prop:disp-univ-arrow-over-cat} and take the same steps as before.

Recall that the Rezk completion lifts to categories with binary products.
In particular, there is a necessarily unique isomorphism $\mu := \mu_{x,y} : \eta(x) \times \eta(y) \cong \eta(x \times y)$, for every $x,y : \CC$ and natural in $x$ and $y$.
\begin{lemma}
[\coqident{CategoryTheory.WeakEquivalences.Preservation.Exponentials}{weak_equiv_preserves_exponentiable_objects_uvp}]
\label{examples:CCC-image}
Let $\CC$ be a cartesian closed category and $G : \CC \to \EE$ a weak equivalence.
Then the image of an exponential, under $G$ is again an exponential.
That is, if $(\expo{x}{y}, ev)$ is an exponential of $x$ with $y$, then $\left(G(\expo{x}{y}), G(\ev)\right)$ is an exponential of $G(x)$ with $G(y)$.
\end{lemma}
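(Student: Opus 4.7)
The plan is to mirror the strategy used in \cref{prop:weq-preserve-binprod}. Since exponentials are characterized by a universal property, the goal is to show that for every $z \in \EE$ and every morphism $f : z \times G(x) \to G(y)$, the type
\[
A := \sum h : \EE(z, G(\expo{x}{y})),\, (h \times \id_{G(x)}) \cdot \mu \cdot G(\ev) = f
\]
is contractible, where $\mu : G(\expo{x}{y}) \times G(x) \cong G(\expo{x}{y} \times x)$ is the product preservation isomorphism provided by \cref{prop:weq-preserve-binprod}. Contractibility is a proposition, so I may freely use essential surjectivity of $G$ to access untruncated data.

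To produce an inhabitant of $A$, I first pick an $x' : \CC$ together with an isomorphism $i : G(x') \cong z$. Precomposing $f$ with $i \times \id_{G(x)}$ and with $\mu_{x',x}^{-1}$ yields a morphism $G(x' \times x) \to G(y)$, which by fully faithfulness of $G$ is the image of a unique $\bar{f} : x' \times x \to y$. The exponential universal property in $\CC$ then produces a unique transpose $\lambda \bar{f} : x' \to \expo{x}{y}$. The candidate is $h := i^{-1} \cdot G(\lambda \bar{f})$; verifying $h \in A$ reduces, after applying functoriality of $G$ and the naturality of $\mu$, to the defining equation of $\lambda \bar{f}$.

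For uniqueness, given $(h_1, q_1), (h_2, q_2) \in A$, I transport the problem back to $\CC$: using $i$ and fully faithfulness of $G$, each composite $i \cdot h_k$ is of the form $G(\bar{h}_k)$ for a unique $\bar{h}_k : x' \to \expo{x}{y}$, and the equations $q_k$, rewritten via $\mu$ and the fully faithfulness of $G$, become $(\bar{h}_k \times \id_x) \cdot \ev = \bar{f}$. Uniqueness of the exponential transpose in $\CC$ then yields $\bar{h}_1 = \bar{h}_2$, and therefore $h_1 = h_2$.

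The main obstacle will be the diagrammatic bookkeeping involving the product preservation isomorphism $\mu$. The evaluation morphism in $\EE$ lives on $G(\expo{x}{y}) \times G(x)$, whereas the exponential adjunction in $\CC$ lives on $x' \times x$, so $\mu$ appears both when reflecting $f$ into $\CC$ and when transporting $h$ forward. One must verify that these two appearances intertwine correctly, which hinges on the naturality of $\mu$ in its first argument. Aside from this, the argument follows the template of \cref{prop:weq-preserve-binprod} almost verbatim.
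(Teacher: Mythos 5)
Your proof is correct and follows exactly the template the paper establishes in \cref{prop:weq-preserve-binprod} (and which its formalized proof of this lemma also follows): phrase the universal property as contractibility of a $\Sigma$-type, exploit that contractibility is a proposition to use essential surjectivity, reflect morphisms along fully faithfulness, and invoke the universal property in $\CC$. Your explicit tracking of the product-preservation isomorphism $\mu$ is precisely the extra bookkeeping the exponential case requires beyond the product case.
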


\begin{proposition}
[\coqident{CategoryTheory.WeakEquivalences.Preservation.Exponentials}{weak_equiv_into_univ_creates_exponentials}
\coqident{CategoryTheory.WeakEquivalences.Preservation.Exponentials}{weak_equiv_preserves_exponentials}]
\label{prop:RC-has-exp}
If $\CC$ is a cartesian closed category, then $\RC(\CC)$ is cartesian closed.
Furthermore, $\eta_\CC : \CC \to \RC(\CC)$ is a cartesian closed functor.
\end{proposition}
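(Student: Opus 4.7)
The plan is to mirror the strategy used for binary products in Propositions~\ref{prop:weq-preserve-binprod} and \ref{prop:RC-has-prod}. A category is cartesian closed precisely when it has finite products and exponentials. Since Section~\ref{sec:RC-completeness}, and in particular \cref{examples:pb-adjoint}, already shows that $\RC(\CC)$ inherits finite products and that $\eta_\CC$ preserves them, the task reduces to constructing exponentials in $\RC(\CC)$ and showing that $\eta_\CC$ preserves them.

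The preservation statement is immediate: applying \cref{examples:CCC-image} to the weak equivalence $\eta_\CC$ says exactly that $\eta_\CC$ sends each exponential pair $(\expo{x}{y}, \ev_{x,y})$ in $\CC$ to an exponential pair $(\eta_\CC(\expo{x}{y}), \eta_\CC(\ev_{x,y}))$ in $\RC(\CC)$, which together with preservation of products (needed to identify $\eta_\CC(x)\times \eta_\CC(y)$ with $\eta_\CC(x \times y)$ via the canonical $\mu$) is the definition of a cartesian closed functor.

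For existence, fix $y_1, y_2 : \RC(\CC)$. Because $\RC(\CC)$ is univalent, an exponential of $y_1$ with $y_2$ is unique up to identity if it exists, so the type of such exponentials is a proposition. This propositionality allows us to strip the truncation in the essential surjectivity of $\eta_\CC$ to obtain $x_1, x_2 : \CC$ together with isomorphisms $\eta_\CC(x_i) \cong y_i$. Since $\CC$ is cartesian closed, the exponential $(\expo{x_1}{x_2}, \ev_{x_1,x_2})$ exists, and by \cref{examples:CCC-image} its image under $\eta_\CC$ is an exponential of $\eta_\CC(x_1)$ with $\eta_\CC(x_2)$. Transporting along the isomorphisms $\eta_\CC(x_i) \cong y_i$ — that is, invoking closure of exponentials under isomorphism in each argument — yields the desired exponential of $y_1$ with $y_2$.

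There is no serious obstacle: the key technical input, that the image of an exponential under a weak equivalence is again an exponential, is exactly \cref{examples:CCC-image}, whose proof proceeds along the same lines as \cref{prop:weq-preserve-binprod}. The one subtle point worth stating explicitly is the propositionality step that justifies discharging the truncation from essential surjectivity, which is handled uniformly by univalence of $\RC(\CC)$ together with the universal property of exponentials, exactly as in the product case.
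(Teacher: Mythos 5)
Your proposal is correct and follows essentially the same route as the paper: the paper's proof of \cref{prop:RC-has-exp} likewise reduces everything to \cref{examples:CCC-image} (weak equivalences preserve exponentials) together with uniqueness of exponentials up to isomorphism, with the existence part obtained exactly as in \cref{prop:RC-has-prod} by stripping the truncation from essential surjectivity via propositionality in the univalent codomain and transporting along the resulting isomorphisms. Your write-up simply makes explicit the steps the paper leaves implicit, including the role of the canonical product comparison $\mu$.
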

\begin{proof}
Since exponentials are unique up to isomorphism, the preservation of exponentials is equivalent to the statement that the image of an exponential object is again an exponential object as in \cref{examples:CCC-image}.
\end{proof}

We are now able to conclude that the Rezk completion lifts from categories to cartesian closed categories.

\begin{proposition}
[\coqident{Bicategories.RezkCompletions.StructuredCats.FiniteLimits}{disp_bicat_chosen_limits_has_RezkCompletion}]
\label{examples:CCC-adjoint}
The inclusion $\univ{\CCC} \hookrightarrow \CCC$ has a left biadjoint.
\end{proposition}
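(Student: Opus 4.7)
The plan is to invoke \cref{prop:disp-univ-arrow-over-cat}, exactly as in the proof of \cref{examples:pb-adjoint} and \cref{examples:limit-adjoint}. Since $\CCC$ is constructed as the total bicategory of $\DD^{\mathsf{exp}}$ over $\CAT^\times$ rather than directly over $\CAT$, I would first apply the $\sum$ construction to view the combined cartesian closed structure as a single displayed bicategory over $\CAT$, whose displayed objects package both binary products and exponentials, and whose displayed morphisms package preservation of both.

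For the creation condition, I would combine \cref{examples:pb-adjoint} with \cref{prop:RC-has-exp}: the former provides binary products on $\RC(\CC)$ preserved by $\eta_\CC$, while the latter provides the exponentials together with the fact that $\eta_\CC$ is cartesian closed. These glue into a cartesian closed structure on $\RC(\CC)$ together with a displayed morphism witnessing that $\eta_\CC$ is cartesian closed.

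For the lifting condition, consider a weak equivalence $G : \CC_0 \to \CC_1$ with $\CC_1$ univalent, a cartesian closed functor $F : \CC_0 \to \CC_2$ into a univalent cartesian closed $\CC_2$, a factorisation $H : \CC_1 \to \CC_2$, and a natural isomorphism $\alpha : G \cdot H \cong F$. Preservation of binary products by $H$ is \cref{lemma:lift-preservation-prod}, so it remains to check that $H$ preserves exponentials. Given $y_1, y_2 : \CC_1$, I would use essential surjectivity of $G$ to obtain $x_1, x_2 : \CC_0$ with isomorphisms $G(x_k) \cong y_k$; by \cref{examples:CCC-image} applied to $G$, the object $G(\expo{x_1}{x_2})$ is an exponential of $G(x_1)$ with $G(x_2)$, and uniqueness up to iso identifies it with $\expo{y_1}{y_2}$. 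Transporting along $\alpha$ then yields $H(\expo{y_1}{y_2}) \cong F(\expo{x_1}{x_2})$; since $F$ preserves exponentials, the latter is an exponential of $F(x_1) \cong H(y_1)$ with $F(x_2) \cong H(y_2)$, and closure of exponentials under isomorphism delivers the conclusion.

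The main obstacle is bookkeeping rather than substance: one must ensure the chain of isomorphisms above actually yields invertibility of the canonical comparison map $H(\expo{y_1}{y_2}) \to \expo{H(y_1)}{H(y_2)}$, which is what preservation of exponentials literally asserts. As exploited in the proof of \cref{prop:RC-has-exp}, this is handled uniformly by the observation that, for a functor already preserving binary products, preservation of an exponential is equivalent to the image of an exponential being an exponential object, since the comparison map is the unique mediating morphism into a universal object.
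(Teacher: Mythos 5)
Your proposal follows the same overall route as the paper: the paper likewise proves this by applying \cref{prop:disp-univ-arrow-over-cat} to $\DD^{\mathsf{exp}}$, obtaining the creation conditions from \cref{prop:RC-has-exp} and reducing the remaining condition to a property of weak equivalences. The one point of divergence is the universality step. The paper disposes of it in a single line by appealing to the fact that weak equivalences \emph{reflect} exponential objects, mirroring the use of \cref{lemma:weq-reflect-prod} in \cref{lemma:lift-preservation-prod}: one takes preimages of $y_1$, $y_2$, \emph{and} of the exponential $\expo{y_1}{y_2}$ itself (with its evaluation map recovered via fully faithfulness), concludes by reflection that the preimage is an exponential in $\CC_0$, and pushes it forward along $F$. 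You instead argue via \emph{preservation} (\cref{examples:CCC-image}) together with uniqueness of exponentials up to isomorphism, identifying $\expo{y_1}{y_2}$ with $G(\expo{x_1}{x_2})$ before transporting along $\alpha$. Both routes are sound; yours trades the reflection lemma for an extra appeal to the cartesian closed structure of $\CC_0$ and to uniqueness of exponentials, which is harmless since $\CC_0$ is assumed cartesian closed, and it reuses a lemma already established rather than requiring a separate reflection statement. Your closing observation, that invertibility of the canonical comparison map reduces to the image of an exponential being an exponential object once binary products are preserved, is exactly the reduction the paper itself employs in the proof of \cref{prop:RC-has-exp}, so the bookkeeping you flag is handled the same way there.
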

\begin{proof}
Again, we use \cref{prop:disp-univ-arrow-over-cat} to $\DD := \DD^\mathsf{exp}$.
The first two conditions thereof is precisely \cref{prop:RC-has-exp}.
The last condition (universality) follows since weak equivalences reflect exponential objects.
\end{proof}

\subsection{Elementary Topoi}
\label{sec:RC-TOP}

We now lift the Rezk completion to elementary topoi: Cartesian closed categories equipped with equalizers and a subobject classifier.
Observe that elementary topoi are necessarily finitely complete.
By \cref{examples:CCC-adjoint,examples:limit-adjoint}, we know that the cartesian closed structure and the equalizers are lifted.
Hence, it remains to lift the subobject classifier.

Let $\CC$ be a category with a terminal object $T$.
Recall that a \dfn{subobject classifier} consists of an object $\Omega : \CC$ and a monomorphism $\tau : \CC(T, \Omega)$, which furthermore satisfies the following universal property.
For every mono $f : \CC(x, y)$, there is a unique morphism $\chi_f : \CC(y, \Omega)$ such that the following diagram commutes and is a pullback square:
\[
\begin{tikzcd}
x \arrow[r, "!"] \arrow[d, swap, "f"] & T \arrow[d, "\tau"] \\
y \arrow[r, swap, "\chi_{f}"] & {\Omega}
\end{tikzcd}
\]
We refer to $\tau$ as the \textit{truth map}.

Furthermore, recall that a functor $F : \CC_0 \to \CC_1$ between categories with a terminal object, denoted $T_i$ for $i=0,1$, and a subobject classifier $(\Omega_i, \tau_i)$ preserves the subobject classifier if $F$ preserves the terminal object and one of the following two equivalent conditions holds:
\begin{enumerate}
\item $\left( F(\Omega_0) , T_1 \xrightarrow{!} F(T_0) \xrightarrow{F(\tau_0)} F(\Omega_0)\right)$ is a subobject classifier in $\CC_1$;
\item there is an isomorphism $i : F(\Omega_0) \simeq \Omega_1$ such that $F(\tau_0) \cdot i =\, !^{-1} \cdot \tau_1$;
\end{enumerate}
where $!$ denotes the unique morphism given by terminality of $F(T_0)$.

Recall from \cref{dfn:catswithlimits} that we denote by $\CAT^\mathsf{1}$ the bicategory of categories with a terminal object.
\begin{definition}
[\coqident{Bicategories.DisplayedBicats.Examples.CategoriesWithStructure.SubobjectClassifier}{disp_bicat_subobject_classifier}]
Let $\DD^{\Omega}$ be the $\CAT^\mathsf{1}$-displayed bicategory whose displayed objects over a category with a terminal object $\CC$ witness that $\CC$ is equipped with a subobject classifier and whose displayed morphisms witness the preservation of the subobject classifier.
The total bicategory, denoted $\CAT^\Omega$, is the bicategory of categories with a terminal object and a subobject classifier and whose morphisms are functors preserving the terminal object and the subobject classifier.
\end{definition}

Recall that we denote by $\LCAT$ the bicategory of finitely complete categories, which is constructed as $\LCAT := \int_\CAT \DD^\mathsf{lim}$.
Furthermore, observe that $\DD^\mathsf{exp}$ and $\DD^\Omega$ can be considered as displayed bicategories over $\LCAT$.
\begin{definition}
[\coqident{Bicategories.DisplayedBicats.Examples.CategoriesWithStructure.Topoi}{disp_bicat_elementarytopoi}]
Let $\DD^\mathsf{top}$ be the product of the $\LCAT$-displayed bicategories $\DD^\mathsf{exp}$ with $\DD^\Omega$.
The total bicategory of $\DD^\mathsf{top}$, denoted $\TOP$, is the bicategory of elementary topoi, logical functors, and natural transformations.
\end{definition}

Recall that a morphism $f : \CC(x,y)$ is a monomorphism if and only if 
\[
\begin{tikzcd}
x \arrow[r, equal] \arrow[d, equal] & x \arrow[d, "f"] \\
x \arrow[r, swap, "f"] & y
\end{tikzcd}
\]
is a pullback square.
\begin{lemma}
[\coqident{CategoryTheory.WeakEquivalences.Mono}{weak_equiv_preserves_monic}\coqident{CategoryTheory.WeakEquivalences.Mono}{reflection_of_mono_is_mono}]
\label{lemma:mono-weq}
Weak equivalences preserve and reflect monomorphisms.
\end{lemma}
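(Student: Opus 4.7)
The plan is to leverage the pullback characterization of monomorphisms displayed immediately before the lemma: $f : \CC(x,y)$ is a monomorphism iff the square with identities on the top and left edges and $f$ on the right and bottom is a pullback. Both directions of the lemma then reduce to a preservation or reflection statement for that particular pullback square, which in turn falls under the finite-limits theory of \cref{sec:RC-completeness}.

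For preservation, let $G : \CC \to \DD$ be a weak equivalence and $f$ a mono in $\CC$. First, I would invoke the pullback analogue of \cref{prop:weq-preserve-binprod} — weak equivalences preserve pullbacks, proved by the same template (lift the candidate cone along essential surjectivity, build the mediating map via fully faithfulness, and check uniqueness as a proposition); this is exactly the pullback case that feeds into \cref{examples:limit-adjoint}. Applying it to the characterizing square transports it to the analogous square on $G(f)$ in $\DD$, which is therefore a pullback, so $G(f)$ is mono.

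For reflection, I would appeal to the pullback version of \cref{lemma:weq-reflect-prod}: fully faithful functors reflect pullbacks, by the usual argument that turns the universal property on the image side into one on the source side through bijectivity on hom-sets. If $G(f)$ is mono, its characterizing image square is a pullback in $\DD$, so the preimage square is a pullback in $\CC$, and $f$ is mono. Reflection also admits a completely elementary proof using only faithfulness: if $f \circ g = f \circ h$, then $G(f) \circ G(g) = G(f) \circ G(h)$, monicity of $G(f)$ gives $G(g) = G(h)$, and faithfulness gives $g = h$.

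No step is a genuine obstacle: the lemma is essentially a corollary of the pullback counterparts of the results in \cref{sec:RC-completeness}, and the only step with any mathematical content is preservation, which is a verbatim adaptation of the proof of \cref{prop:weq-preserve-binprod} specialised to the diagram that certifies monicity.
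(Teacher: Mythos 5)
Your proposal matches the paper's intended argument: the pullback characterization of monomorphisms displayed just before the lemma is precisely what the paper uses, with preservation following from the pullback analogue of \cref{prop:weq-preserve-binprod} and reflection from (full) faithfulness alone, exactly as your elementary argument shows and as the paper's remark after the lemma indicates. No gaps.
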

Observe that the reflection of monomorphisms follows from only fully faithfulness, but not essential surjectivity.

Recall that if $G$ is a weak equivalence and $T_0$ is a terminal object, then $G(T_0)$ is terminal.
\cref{examples:omega-image} follows since weak equivalences preserve pullbacks:
\begin{lemma}
[\coqident{CategoryTheory.WeakEquivalences.Preservation.SubobjectClassifier}{weak_equiv_preserves_subobject_classifier}]
\label{examples:omega-image}
Let $G : \CC \to \EE$ be a weak equivalence between categories with a terminal object, denoted $T_\CC$ and $T_\EE$ respectively.
Suppose that $(\Omega, \tau : T_\CC \to \Omega)$ is a subobject classifier in $\CC$.
Then $G(\Omega)$ is a subobject classifier whose \textit{truth map} is $! \cdot G(\tau) : T_\EE \to G(\Omega)$, where $!$ is the unique morphism from $T_\EE$ to $G(T_\CC)$.
\end{lemma}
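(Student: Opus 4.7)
The plan is to directly verify the two defining data of a subobject classifier for the pair $(G(\Omega), ! \cdot G(\tau))$ in $\EE$. First, that $! \cdot G(\tau)$ is a monomorphism is immediate from the observations gathered just above the statement: $G(\tau)$ is a mono by \cref{lemma:mono-weq}, while $! : T_\EE \to G(T_\CC)$ is an isomorphism because $G(T_\CC)$ is terminal, and monos are closed under precomposition with isomorphisms.

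For the universal property, I reduce the situation in $\EE$ to the analogous one in $\CC$ using essential surjectivity and fully faithfulness of $G$. Given a monomorphism $f : y_1 \to y_2$ in $\EE$, I use essential surjectivity of $G$ to choose objects $x_k : \CC$ and isomorphisms $i_k : G(x_k) \cong y_k$ for $k = 1, 2$; such choices can be made from the mere existence because being a subobject classifier is a proposition. The composite $i_1 \cdot f \cdot i_2^{-1} : G(x_1) \to G(x_2)$ then lifts uniquely through the fully faithful $G$ to a morphism $f' : x_1 \to x_2$, which is a mono by the reflection statement in \cref{lemma:mono-weq}. The universal property of $(\Omega, \tau)$ in $\CC$ then produces a unique classifying map $\chi_{f'} : x_2 \to \Omega$, and I define $\chi_f := i_2^{-1} \cdot G(\chi_{f'})$.

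It remains to verify that $\chi_f$ makes the classifying square for $f$ a pullback and that it is the unique such morphism. For the pullback property, I apply $G$ to the pullback square in $\CC$ classifying $f'$; since weak equivalences preserve finite limits (\cref{sec:RC-completeness}), this is a pullback in $\EE$, and the desired square for $f$ is obtained from it by composing the two columns with the isomorphisms $i_k$ and the top edge with $!^{-1}$, all of which preserve the pullback property. For uniqueness, any alternative $\chi : y_2 \to G(\Omega)$ classifying $f$ yields $i_2 \cdot \chi : G(x_2) \to G(\Omega)$, which lifts uniquely through full faithfulness to some $\chi' : x_2 \to \Omega$; reversing the reduction transports the pullback condition back to $\CC$ (using that fully faithful functors reflect pullbacks, since the condition can be phrased purely in terms of bijections of hom-sets), and uniqueness of $\chi_{f'}$ in $\CC$ forces $\chi' = \chi_{f'}$, hence $\chi = \chi_f$. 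The main obstacle is the careful bookkeeping needed to trace the three isomorphisms $i_1$, $i_2$, and $!$ through the two pullback diagrams in both directions of the translation, so that the pullback property genuinely transfers back and forth between $\CC$ and $\EE$.
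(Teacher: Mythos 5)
Your proof is correct and follows essentially the same route as the paper, which justifies the lemma via preservation of terminal objects and pullbacks together with the preservation and reflection of monomorphisms (\cref{lemma:mono-weq}), reducing the universal property in $\EE$ to the one in $\CC$ by essential surjectivity (using that unique existence of the classifying map is a proposition) and full faithfulness. The bookkeeping with $i_1$, $i_2$, and $!$ that you flag is exactly the content of the formalized proof; nothing is missing.
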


\begin{proposition}
[\coqident{Bicategories.RezkCompletions.StructuredCats.SubobjectClassifier}{disp_bicat_subobject_classifier_has_RezkCompletion}]
\label{examples:omega-adjoint}
The inclusion $\univ{\CAT}^\Omega \hookrightarrow \CAT^\Omega$ has a left biadjoint.
\end{proposition}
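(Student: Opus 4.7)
The plan is to apply \cref{prop:disp-univ-arrow-over-cat} exactly as in \cref{examples:pb-adjoint,examples:CCC-adjoint}, now to the displayed bicategory $\DD^\Omega$; since $\DD^\Omega$ is displayed over $\CAT^\mathsf{1}$ rather than $\CAT$, I first bundle it with $\DD^\mathsf{1}$ via the $\sum$ construction so that the hypothesis of \cref{prop:disp-univ-arrow-over-cat} applies. Verifying its two conditions then reduces to transporting the subobject classifier along the Rezk unit and across factorizations through weak equivalences.

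For the first condition, let $\CC$ carry a terminal object $T$ and a subobject classifier $(\Omega, \tau)$. \cref{examples:omega-image}, applied to the weak equivalence $\eta_\CC$, directly shows that $\eta_\CC(\Omega)$ equipped with the truth map $! \cdot \eta_\CC(\tau)$ is a subobject classifier in $\RC(\CC)$. This simultaneously supplies the displayed structure on $\RC(\CC)$ and the displayed morphism witnessing that $\eta_\CC$ preserves the subobject classifier.

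For the second (universality) condition, consider the triangle of \cref{eqn:goal_pb_liftpreservation-alpha} with $G$ a weak equivalence, $F$ preserving the subobject classifier, and $\alpha : G \cdot H \Rightarrow F$ a natural isomorphism; I need $H$ to preserve the classifier. Following the template of \cref{lemma:lift-preservation-prod}, I would take a monomorphism $m$ in the domain of $H$, use essential surjectivity of $G$ to lift its source and target to objects in $\CC$, and then invoke reflection of monomorphisms by the fully faithful $G$ (\cref{lemma:mono-weq}) together with reflection of pullback squares by fully faithful functors to transport the classifying square of $m$ back into $\CC$. Preservation by $F$ then produces a classifying square in $\EE$, which the natural isomorphism $\alpha$ converts into a classifying square for $H(m)$. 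Uniqueness of classifying morphisms in the univalent category $\EE$ finally yields the required isomorphism between $H$ applied to the chosen classifier in the middle category and the classifier in $\EE$, compatible with the truth maps.

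The main obstacle is diagrammatic bookkeeping rather than conceptual novelty: the truth map produced in $\RC(\CC)$ involves a mediating morphism $!$ between two terminal objects, and coherently chasing this through $\alpha$ together with the induced terminal-object isomorphisms requires several invocations of uniqueness (of terminal mediators, of pullback mediators, and of classifying maps) before the diagrams collapse. All the genuinely new categorical content, however, is already packaged in \cref{examples:omega-image,lemma:mono-weq}, which is why the appeal to \cref{prop:disp-univ-arrow-over-cat} goes through in the same manner as the previous lifts.
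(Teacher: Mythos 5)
Your proposal is correct and follows the same route as the paper: the paper gives no explicit proof for this proposition, but the intended argument is exactly your application of \cref{prop:disp-univ-arrow-over-cat} (after packaging $\DD^\Omega$ with the terminal-object structure via the $\sum$ construction over $\CAT^\mathsf{1}$), with \cref{examples:omega-image} supplying both the subobject classifier on $\RC(\CC)$ and its preservation by $\eta_\CC$, and a lifting-of-preservation step, via \cref{lemma:mono-weq} and reflection of pullbacks by fully faithful functors, supplying universality. As a minor streamlining of your universality step, rather than chasing classifying squares of arbitrary monomorphisms, the required isomorphism $H(\Omega_{\DD}) \cong \Omega_{\EE}$ compatible with truth maps falls out of the chain $H(\Omega_{\DD}) \cong H(G(\Omega_{\CC})) \cong F(\Omega_{\CC}) \cong \Omega_{\EE}$, using that subobject classifiers are unique up to unique isomorphism in $\DD$ (so $G(\Omega_\CC) \cong \Omega_\DD$ by \cref{examples:omega-image}), naturality of $\alpha$, and preservation by $F$.
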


We now conclude that $\RC : \LCAT \to \univ{\LCAT}$ lifts to a left biadjoint for $\univ{\TOP} \to \TOP$.

\begin{theorem}
[\coqident{Bicategories.RezkCompletions.StructuredCats.Topoi}{disp_bicat_elementarytopoi_has_RezkCompletion}]
\label{examples:eltopoi-adjoint}
The inclusion $\univ{(\TOP)} \hookrightarrow \TOP$ has a left biadjoint $\RC^{top}$.
In particular:
\begin{enumerate}
\item the Rezk completion of an elementary topos $\EE$ is an elementary topos $\RC^{top}(\EE)$;
\item the weak equivalence $\eta_\EE : \EE \to \RC^{top}(\EE)$ is a logical functor;
\item and, $(\RC^{top}(\EE), \eta_\EE)$ is universal among the univalent elementary topoi.
\end{enumerate}
\end{theorem}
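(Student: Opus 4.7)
The plan is to apply \cref{prop:disp-univ-arrow-over-cat} to the displayed bicategory $\DD^{\mathsf{top}}$, viewed via the $\sum$ construction as a displayed bicategory over $\CAT$. Concretely, this reduces the theorem to two tasks: (1) given a weak equivalence $G : \CC_0 \to \CC_1$ with $\CC_1$ univalent and a topos structure $\xxx$ on $\CC_0$, produce a topos structure $\hat{x}$ on $\CC_1$ together with a displayed morphism $\xxx \rightarrow_G \hat{x}$ witnessing that $G$ is a logical functor; and (2) given a logical functor $F : \CC_0 \to \CC_2$ that factorizes up to natural isomorphism as $G \cdot H$, with $\CC_2$ a univalent topos, show that $H$ is a logical functor. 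Once these are established, \cref{prop:disp-univ-arrow-over-cat} and \cref{prop:total-biadjunction} together deliver the left biadjoint $\RC^{\mathsf{top}}$, and the three bullets fall out by unfolding: $\RC^{\mathsf{top}}(\EE)$ is $\RC(\EE)$ equipped with the topos structure produced in (1), $\eta_\EE$ is the Rezk unit equipped with the logical-functor data, and the universality clause is the displayed adjoint equivalence condition.

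Both tasks decompose componentwise along the iterated $\sum$ construction that builds $\TOP$ in stages: first $\DD^{\mathsf{lim}}$ over $\CAT$, and then $\DD^{\mathsf{exp}}$ and $\DD^\Omega$ over $\LCAT$. For creation, \cref{examples:limit-adjoint} first supplies the finite-limit structure on $\RC(\CC_0)$ together with the proof that $\eta_{\CC_0}$ preserves finite limits, which is precisely what enables reinterpreting $\DD^{\mathsf{exp}}$ and $\DD^\Omega$ as displayed over the resulting object of $\LCAT$; then \cref{examples:CCC-adjoint} equips $\RC(\CC_0)$ with exponentials and shows that $\eta_{\CC_0}$ is cartesian closed, and \cref{examples:omega-adjoint} equips it with a subobject classifier preserved by $\eta_{\CC_0}$. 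For universality, the universality clause in each of \cref{examples:limit-adjoint,examples:CCC-adjoint,examples:omega-adjoint} states exactly that the respective structure is preserved by $H$ whenever $F$ preserves it, so their conjunction yields that $H$ is logical.

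The main obstacle is mostly bookkeeping: one has to check that a displayed universal arrow for a $\sum$ or a product of displayed bicategories is obtained by pairing the displayed universal arrows for the constituents. Since all displayed bicategories involved have contractible $2$-cells (\cref{rem:dfn-bicat}), no additional coherence between the components needs to be verified, and the displayed adjoint equivalence condition from \cref{dfn:disp_left_universal_arrow} is checked componentwise. Beyond this assembly, the individual creation and lifting lemmas for finite limits, exponentials, and the subobject classifier have already been proved in the preceding subsections, so the remainder of the argument is essentially mechanical.
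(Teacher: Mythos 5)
Your proposal is correct and follows essentially the same route as the paper: the theorem is obtained by applying \cref{prop:disp-univ-arrow-over-cat} (via \cref{prop:total-biadjunction}) to $\DD^{\mathsf{top}}$, assembled componentwise from the already-established creation and lifting results for finite limits (\cref{examples:limit-adjoint}), exponentials (\cref{examples:CCC-adjoint}), and the subobject classifier (\cref{examples:omega-adjoint}), with the $\sum$/product construction and contractible $2$-cells handling the bookkeeping. The paper gives no further argument beyond this assembly, so nothing is missing from your account.
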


\subsection{W-Topoi}
\label{sec:RC:TOPN}

As a final example, we lift the Rezk completion to categories equipped with a (parameterized) natural numbers object.
In combination with \cref{examples:omega-adjoint}, we conclude that the Rezk completion lifts from categories to elementary topoi with a natural numbers object.
Since elementary topoi with a natural numbers object have W-types \cite[Prop. 3.6.]{MoerdijkPalmgren}, we call such topoi \dfn{W-topoi}.

The most common way to interpret, or axiomatize, the object of natural numbers in a category leads to the definition of a \textit{natural numbers object}.
Nonetheless, in the absence of exponentials, there is a more appropriate interpretation by strengthening the recursion principle and is known as a \textit{parameterized natural numbers object} \coqident{CategoryTheory.Arithmetic.ParameterizedNNO}{parameterized_NNO} \cite{M2010}.
We work with the more general version.

A \dfn{parameterized natural numbers object} (pNNO) in a category $\CC$ with finite products is a tuple $(\NNO, z, s)$ where $\NNO : \CC$ is an object and $z : \CC(T, \NNO)$, $s : \CC(\NNO,\NNO)$ are morphisms which satisfies the following universal property: for every tuple $(t : \CC, m : \CC, z' : \CC(t, m), s' : \CC(m,m))$, there exists a unique $f : \CC(t \times \NNO, m)$ such that the following diagram commutes:
\[
\begin{tikzcd}
{t} \arrow[r, "{\langle \id_t\, ,\, ! \cdot z \rangle}"] \arrow[rd, swap, "{z'}"] & {t \times \NNO} \arrow[d, "f"] & {t \times \NNO} \arrow[d, "f"] \arrow[l, swap, "{\id \times s}"] \\
& m & m \arrow[l, "{s'}"]
\end{tikzcd}
\]

Recall that the image of a terminal object under a weak equivalence is again terminal.
\begin{lemma}
[\coqident{CategoryTheory.WeakEquivalences.PNNO}{weak_equiv_preserves_parameterized_NNO}\coqident{CategoryTheory.WeakEquivalences.PNNO}{weak_equiv_reflects_parameterized_NNO}]
\label{examples:nno-preserve-and-reflect}
Let $G : \CC_0 \to \CC_1$ be a weak equivalence where both $\CC_0$ and $\CC_1$ have finite products.
Assume that $\NNO : \CC_0$, $z : \CC_0(T_0, \NNO)$, and $s : \CC_0(\NNO,\NNO)$ are given.
Then $(\NNO, z, s)$ is a pNNO if and only if $\left( G(\NNO), G(z), G(s) \right)$ is a pNNO.
\end{lemma}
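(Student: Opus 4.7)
The plan is to follow the same template used for binary products (\cref{prop:weq-preserve-binprod}) and exponentials (\cref{examples:CCC-image}): handle the two directions separately but with the same three ingredients, namely essential surjectivity (to transport recursion data from the codomain into the domain), fully faithfulness (to lift morphisms in $\CC_1$ to morphisms in $\CC_0$ and to reflect equalities), and the already established fact that $G$ preserves binary products and the terminal object. The former supplies, for every $a, b : \CC_0$, a canonical isomorphism $\mu_{a,b} : G(a \times b) \cong G(a) \times G(b)$, which will be needed to compare the two universal properties since pNNO-recursion involves the product with $\NNO$.

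For the preservation direction, suppose $(\NNO, z, s)$ is a pNNO in $\CC_0$ and let a tuple $(t, m, z', s')$ in $\CC_1$ be given. Being a pNNO is a proposition, so we may invoke essential surjectivity to choose $\tilde{t}, \tilde{m} : \CC_0$ with isomorphisms $\phi_t : G(\tilde{t}) \cong t$ and $\phi_m : G(\tilde{m}) \cong m$; full faithfulness then produces $\tilde{z}' : \CC_0(\tilde{t}, \tilde{m})$ and $\tilde{s}' : \CC_0(\tilde{m}, \tilde{m})$ whose images under $G$ agree with $z'$ and $s'$ up to the chosen isomorphisms. The universal property of $(\NNO, z, s)$ in $\CC_0$ yields a unique $\tilde{f} : \CC_0(\tilde{t} \times \NNO, \tilde{m})$, and the sought morphism $f : \CC_1(t \times G(\NNO), m)$ is obtained as the composite of $\phi_t^{-1} \times \id$, $\mu_{\tilde{t},\NNO}^{-1}$, $G(\tilde{f})$, and $\phi_m$. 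The required commuting diagrams in $\CC_1$ follow from the functoriality of $G$, the corresponding diagrams for $\tilde{f}$ in $\CC_0$, and the naturality of $\mu$. Uniqueness is proved by taking any other candidate, transporting it back through the same isomorphisms and $\mu$, applying full faithfulness, and invoking the uniqueness of $\tilde{f}$.

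For the reflection direction, suppose $(G(\NNO), G(z), G(s))$ is a pNNO in $\CC_1$ and let $(t, m, z', s')$ be given in $\CC_0$. Applying the universal property in $\CC_1$ to the tuple $(G(t), G(m), G(z'), G(s'))$ provides a unique $\tilde{f} : \CC_1(G(t) \times G(\NNO), G(m))$; precomposing with $\mu_{t,\NNO}$ and using full faithfulness of $G$ yields a unique $f : \CC_0(t \times \NNO, m)$. The two triangles required of $f$ in $\CC_0$ are reflected by $G$, so they reduce, modulo the naturality of $\mu$, to the two triangles satisfied by $\tilde{f}$ in $\CC_1$, which hold by construction. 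Uniqueness in $\CC_0$ similarly reduces to uniqueness in $\CC_1$ via full faithfulness.

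The main obstacle will be purely bookkeeping: carefully tracking how the two diagrams defining a pNNO translate across the canonical isomorphism $\mu_{t,\NNO}$, and across the terminal-object preservation isomorphism $G(T_0) \cong T_1$ (recall that $z$ has domain the terminal object, so this iso intervenes in the $z$-triangle alongside $\mu$). Once these coherences are unfolded, the verifications reduce to applications of the corresponding universal property on the other side, analogously to the calculation in the proof of \cref{prop:weq-preserve-binprod}.
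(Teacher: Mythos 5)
Your proposal is correct and matches the approach taken in the paper: the paper omits a written proof of this lemma, but its formalization follows exactly the template you describe, reusing the pattern of \cref{prop:weq-preserve-binprod} and \cref{examples:CCC-image} (essential surjectivity plus propositionality of the universal property for preservation, full faithfulness for reflection, with the product- and terminal-preservation isomorphisms handling the bookkeeping). No gaps.
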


Let $\NNOCAT$ be the bicategory whose objects are categories equipped with finite products and a pNNO, and whose morphisms are terminal preserving functors $F : \CC_0 \to \CC_1$ such that the unique morphism from $\NNO_1 \to F(\NNO_0)$, induced by the universal property of the pNNO, is an isomorphism.

\begin{proposition}
[\coqident{Bicategories.RezkCompletions.StructuredCats.ParameterizedNNO}{disp_bicat_parameterized_NNO_has_Rezk_completion}]
\label{examples:nno-adjoint}
The inclusion $\univ{(\NNOCAT)} \hookrightarrow \NNOCAT$ has a left biadjoint.
\end{proposition}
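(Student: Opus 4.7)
The plan is a direct instance of Proposition \ref{prop:disp-univ-arrow-over-cat}, applied to the displayed bicategory $\DD^{\mathsf{pNNO}}$ over $\CAT$ whose displayed objects over $\CC$ consist of a choice of finite products together with a pNNO on $\CC$, and whose displayed morphisms encode the preservation conditions defining the $1$-cells of $\NNOCAT$. This parallels the strategy used throughout Sections \ref{sec:RC-completeness}--\ref{sec:RC-TOP}. Since the finite product layer has already been lifted in Proposition \ref{examples:limit-adjoint}, the only new work lives in the pNNO layer, and the essential input is Lemma \ref{examples:nno-preserve-and-reflect}, which gives both preservation and reflection of pNNOs by weak equivalences.

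For the creation clause: given $\CC$ with pNNO $(\NNO_\CC, z_\CC, s_\CC)$ I define the pNNO on $\RC(\CC)$ by $\NNO_{\RC(\CC)} := \eta_\CC(\NNO_\CC)$ and take its zero and successor to be $\eta_\CC(z_\CC)$ and $\eta_\CC(s_\CC)$, composed where necessary with the coherence isomorphism between $\eta_\CC(T_\CC)$ and the chosen terminal in $\RC(\CC)$. The preservation half of Lemma \ref{examples:nno-preserve-and-reflect} makes this triple a pNNO, and simultaneously exhibits $\eta_\CC$ as pNNO-preserving, since the canonical comparison map is the identity up to the coherence isomorphism for the terminal.

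For the universality clause, suppose $F : \CC \to \EE$ preserves the pNNO with $\EE$ univalent, and $\alpha : \eta_\CC \cdot H \cong F$ for some $H : \RC(\CC) \to \EE$. To show $H$ preserves the pNNO, it suffices to check that $(H(\NNO_{\RC(\CC)}), H(z), H(s))$ is a pNNO in $\EE$. But $H(\NNO_{\RC(\CC)}) = H(\eta_\CC(\NNO_\CC))$ is isomorphic, via $\alpha$, to $F(\NNO_\CC)$, and $(F(\NNO_\CC), F(z_\CC), F(s_\CC))$ is a pNNO in $\EE$ by the assumption on $F$; naturality of $\alpha$ identifies the transferred structure with $H(z), H(s)$ up to the terminal coherence, and pNNOs are closed under isomorphism. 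The main nuisance, rather than a genuine obstacle, is that pNNO preservation is formulated via the mapping-out universal property, so the canonical comparison points from $\NNO_\EE$ into the image of $\NNO_\CC$; one must track the coherence isomorphism $F(T_\CC) \cong T_\EE$ throughout, which is the same kind of bookkeeping that arises for the subobject classifier in Section \ref{sec:RC-TOP}.
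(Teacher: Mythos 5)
Your proposal is correct and follows exactly the route the paper takes: apply \cref{prop:disp-univ-arrow-over-cat} to the pNNO layer, with the creation clause supplied by the preservation half of \cref{examples:nno-preserve-and-reflect} and the universality clause by reflection (or, equivalently, by uniqueness of pNNOs in a univalent category, which covers the case where the given pNNO on $\RC(\CC)$ is not literally $\eta_\CC(\NNO_\CC)$). The bookkeeping you flag around the terminal-object coherence isomorphism is indeed the only delicate point, and you handle it as the paper does.
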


\begin{corollary}
[\coqident{Bicategories.RezkCompletions.StructuredCats.Topoi}{disp_bicat_arithmetic_elementarytopoi_has_RezkCompletion}]
\label{examples:topnno-adjoint}
The left biadjoint $\RC$ lifts to elementary topoi with an (parameterized) natural numbers object.
\end{corollary}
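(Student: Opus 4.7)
The plan is to combine the two already-established lifts into a single one via the modularity of the framework developed in \cref{sec:on-lifting-adj}. Concretely, the bicategory of W-topoi is the total bicategory of a $\CAT$-displayed bicategory $\DD^{\mathsf{Wtop}}$ obtained by pairing the displayed bicategory $\DD^{\mathsf{top}}$ of elementary topoi (\cref{exa:disp-bicat-of-top}) with the displayed bicategory encoding the pNNO structure. Since both can be presented as $\CAT$-displayed bicategories via the $\sum$ construction (as in the remark following \cref{exa:disp-bicat-of-top}), their product over $\CAT$ yields the required $\DD^{\mathsf{Wtop}}$, whose total bicategory is the bicategory of W-topoi and logical functors that preserve the pNNO.

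I would then verify the two hypotheses of \cref{prop:disp-univ-arrow-over-cat} for $\DD^{\mathsf{Wtop}}$. For a weak equivalence $G : \CC \to \hat{\CC}$ into a univalent category carrying a W-topos structure on $\CC$, the transfer of the topos component to $\hat{\CC}$ together with the fact that $G$ preserves it is supplied by \cref{examples:eltopoi-adjoint}; the transfer of the pNNO and its preservation by $G$ is supplied by \cref{examples:nno-adjoint}, which in turn rests on the preservation half of \cref{examples:nno-preserve-and-reflect}. For the universality hypothesis, given a commuting triangle $F \cong \eta_\CC \cdot H$ with $F$ a morphism of W-topoi, the functor $H$ preserves the topos part by \cref{examples:eltopoi-adjoint}, and it preserves the pNNO by \cref{examples:nno-adjoint}, whose key ingredient is the reflection half of \cref{examples:nno-preserve-and-reflect}. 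Both hypotheses of \cref{prop:disp-univ-arrow-over-cat} therefore hold, and the corollary follows.

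There is no genuine mathematical obstacle: the only subtlety is bookkeeping, namely that the terminal object and finite products underlying the pNNO must agree with those supplied by the topos structure. This compatibility is automatic because the relevant structures are unique up to (unique) isomorphism — and strictly unique once $\hat{\CC}$ is univalent — so the product construction of displayed bicategories absorbs the compatibility condition without further proof.
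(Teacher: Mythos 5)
Your proposal is correct and matches the paper's (largely implicit) argument: the corollary follows by taking the product of the displayed bicategory of elementary topoi with that of parameterized natural numbers objects and combining the lifts of \cref{examples:eltopoi-adjoint} and \cref{examples:nno-adjoint} via the modular framework of \cref{prop:disp-univ-arrow-over-cat}. The only cosmetic difference is your final remark on compatibility of the underlying finite products: in the formalization this is handled by construction, since both structures are displayed over a common base carrying the chosen limits, rather than by appealing to uniqueness up to isomorphism.
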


\section{Related Work}
\label{sec:related}

We elaborate on the different methods to extend the Rezk completion to categories with structure found in the literature.

In \cite{ALV2017}, the Rezk completion has been extended to, in particular, categories with families (CwFs).
That is, they showed that every weak equivalence into a univalent category (i.e., the Rezk completion) induces a map from the type of CwFs on the domain to the type of CwFs on the codomain.
Furthermore, they constructed for every weak equivalence an equivalence between the representable map structures on the involved categories.
These results allowed them to construct an equivalence between representable maps of presheaves on a category and the CwFs on its Rezk completion.
While our focus has been on displayed bicategories with contractible $2$-cells, CwFs and representable maps require propositional $2$-cells.
Nonetheless, \cref{prop:disp-univ-arrow-over-cat} could be slightly generalized to have propositional $2$-cells.
In particular, we expect our methodology to also work for these structures.

In \cite{VW2021}, it was proven that the inclusion of the bicategory of univalent groupoids into the bicategory of all groupoids admits a left biadjoint.
Furthermore, it was shown that this biadjoint lifts to a variety of structures on groupoids.
However, instead of considering each structure individually, as we have done, \cite{VW2021} presented a signature for higher inductive types to encode structure on groupoids and then showed that the structure definable by such HITs induces a left biadjoint for the inclusion of structured univalent groupoids into structured groupoids.
In particular, \cite{VW2021} considers \textit{algebraic} structure, whereas we consider structure defined via a universal property.
Furthermore, while we work with biadjoints in terms of equivalences between hom-categories, \cite{VW2021} works with biadjoints in terms of the unit-counit description.

In \cite{WMA2022}, the Rezk completion has been extended to \textit{monoidal categories}.
In particular, they also proved that the biadjoint given by the Rezk completion lifts from categories to monoidal categories, although not in the exact same words.
The approach considered here differs in the following way. 
First, we do not rely on the universal property of weak equivalences and the Rezk completion. 
Instead, we do the lifting in terms of weak equivalences. 
Second, whereas in \cite{WMA2022}, they relied on displayed categories to
lift the equivalence on hom-categories, we now rely on displayed bicategories to
formulate displayed biadjunctions/universal arrows.

Instead of lifting via \textit{abstract Rezk completions}, one can also try to use a \textit{concrete implementation} of the Rezk completion, either via the presheaf construction, or via higher inductive types.
In \cite{WMA2022}, it was furthermore sketched how the presheaf construction inherits a monoidal structure by considering the Day convolution structure on the category of presheaves.
In \cite{vdW24-enriched}, both implementations of the Rezk completion have been extended to the setting of enriched categories.
However, as opposed to the monoidal case, some modifications have been made.
First, the notion of fully faithfulness, and hence weak equivalence, has to be adapted to take the enrichment into account.
Nonetheless, \cite{vdW24-enriched} showed that such weak equivalences imply an equivalence on the \textit{enriched functor categories}, similar to the (non-enriched) weak equivalences.
Second, as opposed to the monoidal Rezk completion, \cite{vdW24-enriched} considered enriched presheaves, as opposed to $\SET$-valued presheaves.
Note that even though the enriched presheaf construction does a priori not correspond to the ordinary presheaf construction, the HIT construction for the enriched Rezk completion shows that the underlying category of the enriched Rezk completion coincides with the \textit{ordinary} Rezk completion.

\section{Conclusion}
\label{sec:conclusion}

\begin{subparagraph}{Conclusion}
We presented a modular approach for lifting the Rezk completion from categories to structured categories, based on displayed universal arrows (\cref{dfn:disp_left_universal_arrow}).
We illustrated this approach, and its modularity, by lifting the Rezk completion from categories to topoi (\cref{examples:topnno-adjoint}).
In particular, we've observed that the lifting of the Rezk completion essentially reduces to the preservation and reflection properties of weak equivalences.
Moreover, the approach presented here is independent of an explicit construction, but follows from its characterization in terms of weak equivalences.
However, while elementary topoi already illustrate that various structures do lift, \cref{lemma:AC-RC-inf} illustrates that for certain structures we need additional assumptions in our logical foundation.
Lastly, topoi are an important notion in categorical logic and semantics since they allow for an interpretation of various type formers and logical constructs.
Hence, our results imply that the Rezk completion preserves these operations, which means that the Rezk completion can be used for the construction of univalent models.
\end{subparagraph}

\begin{subparagraph}{Future Directions}
There are multiple ways in which this work can be extended.
First, in \cite{AFMVW2021}, it is proven that the presheaf construction for the Rezk completion of ($1$-)categories induces a similar result when passing to (locally univalent) bicategories.
Hence, there is the question of how the results presented here \textit{categorify} to higher categories and topoi (see \eg, \cite{S1982,W2007}).

Second, our results imply that the tripos-to-topos construction (\cref{exa:trip-to-top}) can be extended to a \textit{tripos-to-univalent-topos construction}.
Consequently, this can be used for the construction of realizability topoi.
However, it remains to be seen whether the resulting topoi satisfy the assumptions of \cite[Thm. 4.6]{F2019}.

Third, there are more categorical structures one can consider.
One such example is \textit{locally cartesian closedness}, used to interpret $\Pi$-types.
We expect this to follow from the results about cartesian closed categories presented above and the fact that taking slice categories is well-behaved with respect to weak equivalences.
Furthermore, the approach presented also works for other categorical structures.
For example, abelian categories can be characterized in an \textit{enrichment-free} way using (finitary) universal constructions, whose preservation properties should follow from the above.
\end{subparagraph}

\newpage
\bibliographystyle{plainurl}
\bibliography{main}

\begin{thebibliography}{10}

\bibitem{AFMVW2021}
Benedikt Ahrens, Dan Frumin, Marco Maggesi, Niccol{ò} Veltri, and Niels
  van~der Weide.
\newblock Bicategories in univalent foundations.
\newblock {\em Math. Struct. Comput. Sci.}, 31(10):1232--1269, 2021.
\newblock \href {https://doi.org/10.1017/S0960129522000032}
  {\path{doi:10.1017/S0960129522000032}}.

\bibitem{AKS2015}
Benedikt Ahrens, Krzysztof Kapulkin, and Michael Shulman.
\newblock Univalent categories and the {Rezk} completion.
\newblock {\em Math. Struct. Comput. Sci.}, 25(5):1010--1039, 2015.
\newblock \href {https://doi.org/10.1017/S0960129514000486}
  {\path{doi:10.1017/S0960129514000486}}.

\bibitem{AL2019}
Benedikt Ahrens and Peter~LeFanu Lumsdaine.
\newblock Displayed categories.
\newblock {\em Logical Methods in Computer Science}, Volume 15, Issue 1, Mar
  2019.
\newblock URL: \url{https://lmcs.episciences.org/4308}, \href
  {https://doi.org/10.23638/LMCS-15(1:20)2019}
  {\path{doi:10.23638/LMCS-15(1:20)2019}}.

\bibitem{ALV2017}
Benedikt Ahrens, Peter~LeFanu Lumsdaine, and Vladimir Voevodsky.
\newblock Categorical structures for type theory in univalent foundations.
\newblock {\em Logical Methods in Computer Science}, Volume 14, Issue 3, Sep
  2018.
\newblock URL: \url{https://lmcs.episciences.org/4801}, \href
  {https://doi.org/10.23638/LMCS-14(3:18)2018}
  {\path{doi:10.23638/LMCS-14(3:18)2018}}.

\bibitem{UPbook}
Benedikt Ahrens, Paige North, Michael Shulman, and Dimitris Tsementzis.
\newblock {\em The Univalence Principle}, volume 305.
\newblock American Mathematical Society, 2025.
\newblock URL: \url{https://doi.org/10.1090/memo/1541}.

\bibitem{Borceux1994handbook1}
Francis Borceux.
\newblock {\em Handbook of Categorical Algebra: Volume 1, Basic category
  theory}.
\newblock Encyclopedia of Mathematics and its Applications. Cambridge
  University Press, 1994.

\bibitem{BDR_highergroupsinHoTT}
Ulrik Buchholtz, Floris van Doorn, and Egbert Rijke.
\newblock Higher groups in homotopy type theory.
\newblock In {\em Proceedings of the 33rd Annual ACM/IEEE Symposium on Logic in
  Computer Science}, LICS '18, page 205–214, New York, NY, USA, 2018.
  Association for Computing Machinery.
\newblock \href {https://doi.org/10.1145/3209108.3209150}
  {\path{doi:10.1145/3209108.3209150}}.

\bibitem{F2019}
Jonas Frey.
\newblock Characterizing partitioned assemblies and realizability toposes.
\newblock {\em Journal of Pure and Applied Algebra}, 223(5):2000--2014, 2019.
\newblock URL:
  \url{https://www.sciencedirect.com/science/article/pii/S0022404918302093},
  \href {https://doi.org/https://doi.org/10.1016/j.jpaa.2018.08.012}
  {\path{doi:https://doi.org/10.1016/j.jpaa.2018.08.012}}.

\bibitem{Higgs1984}
Denis Higgs.
\newblock Injectivity in the topos of complete heyting algebra valued sets.
\newblock {\em Canadian Journal of Mathematics}, 36(3):550–568, 1984.
\newblock \href {https://doi.org/10.4153/CJM-1984-034-4}
  {\path{doi:10.4153/CJM-1984-034-4}}.

\bibitem{Hyland2017}
J.~Martin~E. Hyland.
\newblock Classical lambda calculus in modern dress.
\newblock {\em Mathematical Structures in Computer Science}, 27(5):762–781,
  2017.
\newblock \href {https://doi.org/10.1017/S0960129515000377}
  {\path{doi:10.1017/S0960129515000377}}.

\bibitem{HJP1980}
John Martin~Elliott Hyland, Peter~Tennant Johnstone, and Andrew~Mawdesley
  Pitts.
\newblock Tripos theory.
\newblock {\em Mathematical Proceedings of the Cambridge Philosophical
  Society}, 88(2):205–232, 1980.
\newblock \href {https://doi.org/10.1017/S0305004100057534}
  {\path{doi:10.1017/S0305004100057534}}.

\bibitem{KL2021}
Krzysztof Kapulkin and Peter~LeFanu Lumsdaine.
\newblock The simplicial model of univalent foundations (after voevodsky).
\newblock {\em Journal of the European Mathematical Society}, 23(6):2071--2126,
  2021.

\bibitem{lawvere_metric_1973}
F.~William Lawvere.
\newblock Metric spaces, generalized logic, and closed categories.
\newblock {\em Rendiconti del Seminario Matematico e Fisico di Milano},
  43(1):135--166, December 1973.
\newblock \href {https://doi.org/10.1007/BF02924844}
  {\path{doi:10.1007/BF02924844}}.

\bibitem{LF_EMspaces_in_HoTT}
Daniel~R. Licata and Eric Finster.
\newblock Eilenberg-maclane spaces in homotopy type theory.
\newblock In {\em Proceedings of the Joint Meeting of the Twenty-Third EACSL
  Annual Conference on Computer Science Logic (CSL) and the Twenty-Ninth Annual
  ACM/IEEE Symposium on Logic in Computer Science (LICS)}, CSL-LICS '14, New
  York, NY, USA, 2014. Association for Computing Machinery.
\newblock \href {https://doi.org/10.1145/2603088.2603153}
  {\path{doi:10.1145/2603088.2603153}}.

\bibitem{M2010}
Maria~Emilia Maietti.
\newblock Joyal's arithmetic universe as list-arithmetic pretopos.
\newblock {\em Theory and Applications of Categories}, 24, 01 2010.
\newblock \href {https://doi.org/10.70930/tac/qee78enb}
  {\path{doi:10.70930/tac/qee78enb}}.

\bibitem{MoerdijkPalmgren}
Ieke Moerdijk and Erik Palmgren.
\newblock Wellfounded trees in categories.
\newblock {\em Annals of Pure and Applied Logic}, 104(1):189--218, 2000.
\newblock URL:
  \url{https://www.sciencedirect.com/science/article/pii/S0168007200000129},
  \href {https://doi.org/https://doi.org/10.1016/S0168-0072(00)00012-9}
  {\path{doi:https://doi.org/10.1016/S0168-0072(00)00012-9}}.

\bibitem{Pitts2002}
Andrew~Mawdesley Pitts.
\newblock Tripos theory in retrospect.
\newblock {\em Math. Struct. Comput. Sci.}, 12(3):265--279, 2002.
\newblock \href {https://doi.org/10.1017/S096012950200364X}
  {\path{doi:10.1017/S096012950200364X}}.

\bibitem{RWAN2025}
Nima Rasekh, Niels van~der Weide, Benedikt Ahrens, and Paige~Randall North.
\newblock {Insights from Univalent Foundations: A Case Study Using Double
  Categories}.
\newblock In J\"{o}rg Endrullis and Sylvain Schmitz, editors, {\em 33rd EACSL
  Annual Conference on Computer Science Logic (CSL 2025)}, volume 326 of {\em
  Leibniz International Proceedings in Informatics (LIPIcs)}, pages
  45:1--45:18, Dagstuhl, Germany, 2025. Schloss Dagstuhl -- Leibniz-Zentrum
  f{\"u}r Informatik.
\newblock URL:
  \url{https://drops.dagstuhl.de/entities/document/10.4230/LIPIcs.CSL.2025.45},
  \href {https://doi.org/10.4230/LIPIcs.CSL.2025.45}
  {\path{doi:10.4230/LIPIcs.CSL.2025.45}}.

\bibitem{S1982}
Ross Street.
\newblock Two-dimensional sheaf theory.
\newblock {\em Journal of Pure and Applied Algebra}, 23(3):251--270, 1982.
\newblock \href {https://doi.org/https://doi.org/10.1016/0022-4049(82)90101-3}
  {\path{doi:https://doi.org/10.1016/0022-4049(82)90101-3}}.

\bibitem{Coq-refman}
The Rocq~Development Team.
\newblock The {R}ocq prover, April 2025.
\newblock \href {https://doi.org/10.5281/zenodo.15149629}
  {\path{doi:10.5281/zenodo.15149629}}.

\bibitem{hottbook}
The {Univalent Foundations Program}.
\newblock {\em Homotopy Type Theory: Univalent Foundations of Mathematics}.
\newblock \url{https://homotopytypetheory.org/book}, Institute for Advanced
  Study, 2013.

\bibitem{LWA2025}
Arnoud van~der Leer, Kobe Wullaert, and Benedikt Ahrens.
\newblock Scott's representation theorem and the univalent karoubi envelope.
\newblock In Yannick Forster and Chantal Keller, editors, {\em 16th
  International Conference on Interactive Theorem Proving, {ITP} 2025,
  Reykjavik, Iceland, September 28 - October 1, 2025}, volume 352 of {\em
  LIPIcs}, pages 33:1--33:20. Schloss Dagstuhl - Leibniz-Zentrum f{\"{u}}r
  Informatik, 2025.
\newblock URL: \url{https://doi.org/10.4230/LIPIcs.ITP.2025.33}, \href
  {https://doi.org/10.4230/LIPICS.ITP.2025.33}
  {\path{doi:10.4230/LIPICS.ITP.2025.33}}.

\bibitem{vdW24-enriched}
Niels van~der Weide.
\newblock Univalent enriched categories and the enriched {Rezk} completion.
\newblock In Jakob Rehof, editor, {\em 9th International Conference on Formal
  Structures for Computation and Deduction, {FSCD} 2024, July 10-13, 2024,
  Tallinn, Estonia}, volume 299 of {\em LIPIcs}, pages 4:1--4:19. Schloss
  Dagstuhl - Leibniz-Zentrum f{\"{u}}r Informatik, 2024.
\newblock URL: \url{https://doi.org/10.4230/LIPIcs.FSCD.2024.4}, \href
  {https://doi.org/10.4230/LIPICS.FSCD.2024.4}
  {\path{doi:10.4230/LIPICS.FSCD.2024.4}}.

\bibitem{W2025}
Niels van~der Weide.
\newblock The formal theory of monads, univalently.
\newblock {\em Logical Methods in Computer Science}, Volume 21, Issue 1, Feb
  2025.
\newblock URL: \url{https://lmcs.episciences.org/13005}, \href
  {https://doi.org/10.46298/lmcs-21(1:16)2025}
  {\path{doi:10.46298/lmcs-21(1:16)2025}}.

\bibitem{vdW2024-internallang-univcat}
Niels van~der Weide.
\newblock The internal languages of univalent categories.
\newblock In {\em 2025 40th Annual ACM/IEEE Symposium on Logic in Computer
  Science (LICS)}, pages 112--126, 2025.
\newblock \href {https://doi.org/10.1109/LICS65433.2025.00016}
  {\path{doi:10.1109/LICS65433.2025.00016}}.

\bibitem{WRAN2024}
Niels van~der Weide, Nima Rasekh, Benedikt Ahrens, and Paige~Randall North.
\newblock Univalent double categories.
\newblock In {\em Proceedings of the 13th ACM SIGPLAN International Conference
  on Certified Programs and Proofs}, CPP 2024, page 246–259, New York, NY,
  USA, 2024. Association for Computing Machinery.
\newblock \href {https://doi.org/10.1145/3636501.3636955}
  {\path{doi:10.1145/3636501.3636955}}.

\bibitem{OostenRealizability}
Jaap van Oosten.
\newblock {\em Realizability, Volume 152: An Introduction to its Categorical
  Side}.
\newblock Elsevier Science, San Diego, CA, USA, 2008.

\bibitem{VW2021}
Niccolò Veltri and Niels van~der Weide.
\newblock Constructing higher inductive types as groupoid quotients.
\newblock {\em Logical Methods in Computer Science}, Volume 17, Issue 2, Apr
  2021.
\newblock URL: \url{https://lmcs.episciences.org/6607}, \href
  {https://doi.org/10.23638/LMCS-17(2:8)2021}
  {\path{doi:10.23638/LMCS-17(2:8)2021}}.

\bibitem{UniMath}
Vladimir Voevodsky, Benedikt Ahrens, Daniel Grayson, et~al.
\newblock {UniMath --- a computer-checked library of univalent mathematics}.
\newblock Available at \url{http://unimath.github.io/UniMath/}, 2024.
\newblock \href {https://doi.org/10.5281/zenodo.13828995}
  {\path{doi:10.5281/zenodo.13828995}}.

\bibitem{W2007}
Mark Weber.
\newblock Yoneda {Structures} from 2-toposes.
\newblock {\em Applied Categorical Structures}, 15(3):259--323, June 2007.
\newblock \href {https://doi.org/10.1007/s10485-007-9079-2}
  {\path{doi:10.1007/s10485-007-9079-2}}.

\bibitem{WMA2022}
Kobe Wullaert, Ralph Matthes, and Benedikt Ahrens.
\newblock Univalent monoidal categories.
\newblock In Delia Kesner and Pierre{-}Marie P{\'{e}}drot, editors, {\em 28th
  International Conference on Types for Proofs and Programs, {TYPES} 2022, June
  20-25, 2022, LS2N, University of Nantes, France}, volume 269 of {\em LIPIcs},
  pages 15:1--15:21. Schloss Dagstuhl - Leibniz-Zentrum f{\"{u}}r Informatik,
  2022.
\newblock URL: \url{https://doi.org/10.4230/LIPIcs.TYPES.2022.15}, \href
  {https://doi.org/10.4230/LIPICS.TYPES.2022.15}
  {\path{doi:10.4230/LIPICS.TYPES.2022.15}}.

\end{thebibliography}

\end{document}